\title{Computing a Fixed Point of Contraction Maps\\ in Polynomial Queries}
\author{
Xi Chen\footnote{Supported by NSF grants IIS-1838154, CCF-2106429 and CCF-2107187.}\\ Columbia University\\\url{xichen@cs.columbia.edu}\hspace{-0.2cm}
\and Yuhao Li\footnote{Supported by NSF grants IIS-1838154, CCF-2106429 and CCF-2107187.}\\Columbia University\\\url{yuhaoli@cs.columbia.edu}\hspace{-0.2cm}
\and Mihalis Yannakakis\footnote{Supported by NSF grants CCF-2107187, CCF-2212233, and CCF-2332922.}\\Columbia University\\\url{mihalis@cs.columbia.edu}\vspace{0.1cm}
}
\date{}
\begin{document}
\maketitle
\begin{abstract}
	We give an algorithm for finding an $\eps$-fixed point of a contraction map $f:[0,1]^k\mapsto[0,1]^k$ under the $\ell_\infty$-norm with query complexity $O (k\log (1/\eps ) )$. 
\end{abstract}

\thispagestyle{empty}
\newpage 
\setcounter{page}{1}
\newpage

\section{Introduction}
A map $f:\calM\mapsto\calM$ on a metric space $(\calM,d)$ is called a \emph{contraction} map (or a $(1-\gamma)$-contraction map) if there exists $\gamma\in(0,1]$ such that $d(f(x),f(y))\leq (1-\gamma)\cdot d(x,y)$ for all points $x,y\in \calM$.
In 1922, Banach~\cite{Ban1922} proved a seminal fixed point theorem which states that every contraction map must have a unique fixed point, i.e., there is a unique $x\in\calM$ that satisfies $f(x)=x$. Distinct from another renowned fixed point theorem by Brouwer, Banach's theorem not only guarantees~the uniqueness of the fixed point  but also provides a method for finding it: iteratively applying the map $f$ starting from any initial point will always converge to the unique fixed point. Over the past century, Banach's fixed point theorem has found extensive applications in many fields. For example, in mathematics it can be used to prove theorems such as the Picard–Lindelöf (or Cauchy-Lipschitz) theorem on the existence and uniqueness of solutions to differential equations (see e.g. \cite{CL55}), 
and the Nash embedding theorem~\cite{nash1956imbedding,gunther1989}. In optimization and machine learning, it is used in the convergence and uniqueness analysis of value and policy iteration in Markov decision processes and reinforcement learning~\cite{bellman1957markovian,howard1960dynamic}. Indeed, as pointed out by Denardo \cite{De67}, contraction mappings underlie many classical dynamic programming (DP) problems and sequential decision processes, including DP models of Bellman, Howard, Blackwell, Karlin and others.

A particularly important metric space to study the problem of finding a Banach's fixed point is the $k$-cube $[0,1]^k$ with respect to the $\ell_\infty$-norm,
  since many important problems can be reduced to that of finding an 
  $\eps$-fixed point (i.e., $x\in [0,1]^k$ satisfying $\|f(x)-x\|_\infty\le\eps$) in a ${(1-\gamma)}$-contraction map under the $\ell_\infty$-norm.
Such problems arise from a variety of fields including stochastic analysis, optimization, verification, semantics, and game theory.
For example, the classical dynamic programming models mentioned above
(Markov decision processes etc.) involve contraction maps under the $\ell_\infty$-norm. Furthermore, the same holds for several well-known open problems that 
have been studied extensively and are currently not known to be in $\P$.
For instance, Condon's simple stochastic games (SSGs) \cite{Con92} can be reduced to the problem of finding an $\eps$-fixed point in a $(1-\gamma)$-contraction map  over $[0,1]^k$ under the $\ell_\infty$-norm. 
A similar reduction from~\cite{EY10} extends to an even broader class of games, namely, Shapley's stochastic games~\cite{Shapley53}, which lay the foundation of multi-agent reinforcement learning~\cite{littman1994markov}.
The same holds also of course for other problems 
known to be subsumed by SSGs, like parity games, which are important in verification (see e.g. \cite{EJ91,CJKLS22}), and mean payoff games \cite{ZP96}. 
Crucially, in all these reductions, 
  both the approximation parameter $\eps$ and the contraction parameter $\gamma$ are inversely exponential in the input size. 
  Therefore, efficient algorithms in this context are those with a complexity upper bound that is polynomial in $k,\log(1/\eps)$ and $\log(1/\gamma)$.

In this paper we consider general algorithms that access the contraction map in a black-box manner (as an oracle), and study the \emph{query complexity} of finding an $\eps$-fixed point of a $(1-\gamma)$-contraction map over the $k$-cube $[0,1]^k$ under the $\ell_\infty$-norm
    (which we denote by $\Contraction(\eps,\gamma,k)$).
 An algorithm under this model is given $k$, $\eps$, $\gamma$, and oracle access to an unknown $(1-\gamma)$-contraction map $f$ over $[0,1]^k$. In each round the algorithm can send a point $x\in[0,1]^k$ to the oracle to reveal its value $f(x)$. The goal of the algorithm is to find an $\eps$-fixed point with as few queries as possible. 

\vspace{0.2cm}
\noindent \textbf{Prior work.} 
Despite much ongoing interest on this problem (e.g., \cite{EY10,DP11,DTZ18,FGMS20,Hol21,FGHS23}), 
  progress in understanding the query complexity of $\Contraction(\eps,\gamma,k)$ has been slow.
Banach's value iteration method needs $\Omega((1/\gamma)\log(1/\eps))$ iterations
  to converge to an~$\eps$-fixed point.
For the special case of $k=2$, 
  \cite{SS02} obtained an $O(\log ( {1}/{\eps} ) )$-query algorithm.
Subsequently, \cite{SS03} obtained an $O (\log^k ( {1}/{\eps}))$-query
algorithm for general $k$ by applying a nontrivial recursive binary search procedure across all $k$ dimensions.
(Recently \cite{FGMS20} obtained similar upper bounds for all $\ell_p$-norms with $2< p < \infty$, though the complexity grows to infinity as $p \rightarrow \infty$.)
Note, however, that all known upper bounds so far are exponential in either $k$ or $\log (1/\gamma)$,
  and this is in sharp contrast with the $\ell_2$-norm case, for which
  \cite{STW93,HKS99} gave an algorithm with both query and time complexity polynomial in
  $k,\log(1/\eps)$ and $\log(1/\gamma)$.

\vspace{0.2cm}
\noindent\textbf{Our contribution.} We obtain the first algorithm
  for $\Contraction(\eps,\gamma,k)$ with polynomial query complexity:
  
\begin{theorem}\label{theorem: main}
There is an $O (k\log ({1}/{\eps}))$-query algorithm for	$\Contraction(\eps,\gamma,k)$.
\end{theorem}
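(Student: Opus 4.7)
The plan is to maintain a box $B = \prod_{i=1}^k [a_i,b_i] \subseteq [0,1]^k$ known to contain the unique fixed point $x^*$, initialized as $[0,1]^k$, and to halve one side per round until every side has length at most $\eps$; any point of $B$ is then an $\eps$-fixed point. Since each coordinate needs at most $\log(1/\eps)$ halvings, an amortized budget of $O(k)$ queries per halving suffices to achieve the $O(k^2 \log(1/\eps))$ target.

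The core subroutine is a \emph{direction oracle}: given $B$, a coordinate $i$, and a midpoint value $c\in(a_i,b_i)$, decide whether $x_i^* \le c$ or $x_i^* \ge c$. I would first establish a one-point slice lemma: for any $y_{-i}\in[0,1]^{k-1}$, if $f_i(c,y_{-i}) > c + (1-\gamma)\|y_{-i}-x^*_{-i}\|_\infty$ then $x_i^* > c$, and symmetrically for the other sign. The proof is a one-line contradiction: assuming $x_i^* \le c$, the $(1-\gamma)$-Lipschitz estimate on $f_i$ together with $f_i(x^*) = x_i^*$ forces $f_i(c,y_{-i}) \le x_i^* + (1-\gamma)\max(|c-x_i^*|,\|y_{-i}-x^*_{-i}\|_\infty)$, which is at most the right-hand side. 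The oracle's task thus reduces to producing a ``witness'' $y_{-i}$ on the midpoint slice whose signed residual $f_i(c,y_{-i}) - c$ dominates the $\ell_\infty$-Lipschitz slack across the slice.

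The natural witness is the fixed point $y^*(c)$ of the sliced contraction $g(x_{-i}) := f_{-i}(c, x_{-i})$, but computing it by recursion on the $(k-1)$-dimensional slice problem would yield the $O(\log^k(1/\eps))$ bound of \cite{SS03}. The key new idea is to \emph{amortize witness maintenance} across the entire execution: store, for each coordinate $i$ and each currently active level of the binary search, an approximate slice-fixed-point, and refresh it by a bounded number of Banach iterations whenever the surrounding box is halved. A halving in a direction $j \ne i$ perturbs the $i$-slice only by the new side length in direction $j$, so a stored witness should require only $O(1)$ refreshes on average, giving $O(k)$ work per halving.

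The main obstacle, and where I expect the deepest technical work, is making the scheme robust to arbitrarily small $\gamma$. The naïve perturbation bound $\|y^*(c)-y^*(c')\|_\infty \le \frac{1-\gamma}{\gamma}|c-c'|$ explodes as $\gamma\to 0$, and even with $y^*(c)$ exactly in hand the witness signal $f_i(c,y^*(c)) - c$ can scale only like $\gamma(x_i^* - c)$, so the required precision of the witness is uncomfortably tight. The fix I would pursue is to introduce a potential function coupling the current side lengths of $B$ to the residuals $\|g(y)-y\|_\infty$ of the stored witnesses, and to show that each halving either strictly shrinks the longest side of $B$ or strictly decreases this potential by a factor bounded away from $1$; charging queries against decreases in the potential keeps the amortized per-halving cost at $O(k)$ with no $1/\gamma$ factor, yielding the theorem.
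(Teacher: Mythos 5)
Your proposal takes a different route from the paper and, as written, has a genuine gap at its core.

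The paper first applies a simple but crucial reduction (its Observation 1): replacing $f$ by $g(x):=(1-\eps/2)f(x)$ makes $g$ a $(1-\eps/2)$-contraction, and an $(\eps/2)$-fixed point of $g$ is an $\eps$-fixed point of $f$. This eliminates $\gamma$ entirely from the problem, so one never needs to worry about $\gamma \to 0$; you can always assume $\gamma = \eps/2$. Your proposal never uses this, and consequently spends its last two paragraphs wrestling with a $1/\gamma$ blow-up that the paper sidesteps in one line. Second, the paper does \emph{not} maintain a box $\prod_i [a_i,b_i]$ and binary-search coordinates. Instead it discretizes to a grid $[0:n]^k$ (with $n\approx 16/(\gamma\eps)$), restricts attention to the ``odd-even'' sublattice $\OE(n,k)$, and maintains a \emph{candidate set} $\Cand^t$ of grid points. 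The key geometric primitive (Lemma~\ref{lemma: region of fixed point}) is: if a queried point $a$ has residual $\norm{g(a)-a} > 16/\gamma$, then $\Fix(g)$ lies in a union of at most $k$ pyramids emanating from $a+4s$ in the directions indicated by the sign pattern $s=\sgn(g(a)-a)$. Combined with Lemma~\ref{lemma: eliminate}, each query eliminates at least one ``opposite'' pyramid per coordinate. The algorithm queries a \emph{balanced} point of $\Cand^{t-1}$ --- one whose two opposing pyramids along some coordinate each contain a $\tfrac{1}{2k}$-fraction of $\Cand^{t-1}$ --- so each query shrinks $|\Cand^t|$ by a $(1-\tfrac{1}{2k})$ factor, giving $O(k\log n^k)=O(k^2\log(1/\eps))$ queries. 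The hard part of the paper is Theorem~\ref{lemma: existence of balanced point} (balanced points exist on $\OE(n,k)$), proved via Brouwer's fixed point theorem on a sequence of continuous relaxations plus a rounding argument; this is what makes the algorithm query-efficient but not time-efficient.

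The gap in your proposal: you correctly identify that the witness $y^*(c)$ (the slice fixed point) is the natural choice, and you correctly identify that computing it --- whether by recursion or by Banach iteration --- is the whole difficulty. But the ``amortize witness maintenance'' plan has no supporting argument. When you halve the box in direction $j$, the slice fixed point $y^*(c)$ for the $i$-slice can move by $\Theta(\tfrac{1-\gamma}{\gamma})$ times the change in the $j$-th coordinate (this is exactly the perturbation bound you cite), and ``a bounded number of Banach iterations'' does not recover precision $O(\gamma\eps)$ unless that number is $\Omega(\tfrac{1}{\gamma}\log\tfrac{1}{\gamma\eps})$, which already exceeds your whole budget. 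The proposed potential function coupling side lengths to witness residuals is not defined, and you say yourself this is ``where I expect the deepest technical work.'' As it stands, this is the entire content of the theorem, not a detail to be filled in: the naive version of your scheme is essentially the recursive binary search of \cite{SS03} which gives $O(\log^k(1/\eps))$, and the proposal does not demonstrate a way to do better. (There is also a smaller technical issue: the stated slice lemma, with threshold $c+(1-\gamma)\norm{y_{-i}-x^*_{-i}}_\infty$, is too weak to fire even with the exact witness $y^*(c)$ when $\gamma$ is small --- one has $\norm{y^*(c)-x^*_{-i}}_\infty \le (1-\gamma)|x^*_i - c|$ and $f_i(c,y^*(c)) \ge c + \gamma(x^*_i - c)$, and $\gamma > (1-\gamma)^2$ fails for $\gamma \lesssim 0.38$. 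The correct and simpler test is $\sgn(f_i(c,y^*(c))-c)=\sgn(x^*_i-c)$, but this does not change the central problem of computing $y^*(c)$.)
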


The observation below explains why our upper bound does not depend on $\gamma$:

\begin{observation}\label{firstob}
Let $f:[0,1]^k\mapsto [0,1]^k$ be a $(1-\gamma)$-contraction map under the $\ell_\infty$-norm.
Consider the map $g:[0,1]^k\mapsto[0,1]^k$ defined as $g(x):=(1-\eps/2)f(x)$. Clearly $g$ is a $(1-\eps/2)$-contraction. Let $x$ be any point with $\norm{g(x)-x}\leq \eps/2$. We have 
$$
\eps/2\ge \|g(x)-x\|_\infty =\|(1-\eps/2)f(x)-x\|_\infty\ge \|f(x)-x\|_\infty-\eps/2.
$$
This gives a black-box reduction 
  from $\Contraction(\eps,\gamma,k)$ to $\Contraction(\eps/2,\eps/2,k)$, which is both query-efficient and time-efficient.
\end{observation}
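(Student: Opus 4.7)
The plan is to construct an explicit scaled version of $f$ whose contraction coefficient is automatically at least $1-\eps/2$, regardless of the original parameter $\gamma$, and then show that approximate fixed points of the new map pull back to approximate fixed points of $f$ with only a constant-factor loss. Concretely, I would define $g(x) := (1-\eps/2)\cdot f(x)$ and verify three properties in sequence, after which the reduction will be immediate.

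First, I would check that $g$ is a well-defined map from $[0,1]^k$ to $[0,1]^k$: since each coordinate of $f(x)$ lies in $[0,1]$ and $1-\eps/2\in[0,1]$, each coordinate of $g(x)$ is in $[0,1]$ as well. Second, I would verify that $g$ is a $(1-\eps/2)$-contraction under the $\ell_\infty$-norm by pulling the scalar out of the norm:
$$
\|g(x)-g(y)\|_\infty = (1-\eps/2)\,\|f(x)-f(y)\|_\infty \le (1-\eps/2)(1-\gamma)\,\|x-y\|_\infty \le (1-\eps/2)\,\|x-y\|_\infty.
$$
Notice that $\gamma$ disappears on the right-hand side, which is exactly why the new contraction parameter can be taken to be $\eps/2$ independently of $\gamma$.

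Third, I would show that any $\eps/2$-fixed point of $g$ is automatically an $\eps$-fixed point of $f$. If $\|g(x)-x\|_\infty \le \eps/2$, then by the triangle inequality and the bound $\|f(x)\|_\infty\le 1$,
$$
\|f(x)-x\|_\infty \le \|f(x)-g(x)\|_\infty + \|g(x)-x\|_\infty = (\eps/2)\,\|f(x)\|_\infty + \eps/2 \le \eps.
$$
Finally, I would observe that one query to $g$ at any point $x$ can be simulated by one query to $f$ followed by a coordinate-wise scalar multiplication, so the reduction is black-box and preserves both query and time complexity up to negligible overhead; plugging in the algorithm for $\Contraction(\eps/2,\eps/2,k)$ yields the claimed reduction from $\Contraction(\eps,\gamma,k)$.

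There is no real obstacle here---the construction is a single line and each verification is elementary. The only conceptual point worth flagging in the write-up is why we scale toward the origin rather than toward some other interior point of the cube: scaling toward $0$ is what keeps $g$ inside $[0,1]^k$ without any clipping, so we get the improved contraction parameter genuinely for free.
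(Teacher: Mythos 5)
Your proposal is correct and follows essentially the same route as the paper: the same scaled map $g(x)=(1-\eps/2)f(x)$, the same one-line verification of the $(1-\eps/2)$-contraction property, and the same triangle-inequality argument (the paper writes it as a reverse triangle inequality, you write it forward, but the bound $\|f(x)-g(x)\|_\infty=(\eps/2)\|f(x)\|_\infty\le\eps/2$ is identical). Nothing is missing.
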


In \Cref{sec:mainalg}, we give an $O(k\log (1/\eps))$-query algorithm for 
  $\Contraction(\eps/2,\eps/2,k)$, from which \Cref{theorem: main} follows.
Indeed, note that Observation~\ref{firstob} holds even if $f$ is a \emph{non-expansive} map
  (i.e., $f$ has Lipschitz constant $1$:
  $\|f(x)-f(y)\|_\infty \le \|x-y\|_\infty$ for all $x,y\in [0,1]^k$).
As a result, the same query upper bound applies to $\NonExp(\eps,k)$, the 
  problem of finding an $\eps$-fixed point in a non-expansive map over $[0,1]^k$ under the $\ell_\infty$-norm:

\begin{corollary} 
There is an $O (k\log ( {1}/{\eps}))$-query algorithm for $\NonExp(\eps,k)$. 
\end{corollary}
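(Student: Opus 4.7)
The plan is to apply the reduction of Observation~\ref{firstob} verbatim, except now starting from a non-expansive map rather than a $(1-\gamma)$-contraction. The key point is that the observation's proof only uses two properties of $f$: that it maps $[0,1]^k$ into itself, and that scaling it by $(1-\eps/2)$ produces a $(1-\eps/2)$-contraction. Both properties survive when $f$ is merely non-expansive.

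First, given oracle access to a non-expansive $f:[0,1]^k\to[0,1]^k$, I would define $g(x):=(1-\eps/2)\,f(x)$ and verify it satisfies the hypotheses of \Cref{theorem: main} with contraction parameter $\gamma=\eps/2$. For any $x,y\in[0,1]^k$,
\[
\|g(x)-g(y)\|_\infty=(1-\eps/2)\,\|f(x)-f(y)\|_\infty\le(1-\eps/2)\,\|x-y\|_\infty,
\]
using only the Lipschitz-1 property of $f$; and since $f(x)\in[0,1]^k$ and $1-\eps/2\in[0,1]$, the image of $g$ lies in $[0,1]^k$. Next, I would note that each evaluation of $g$ costs exactly one query to $f$, so an oracle for $g$ is simulated with no query overhead.

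Then I would invoke \Cref{theorem: main} on $g$ with parameters $(\eps/2,\eps/2,k)$ to obtain, in $O(k^2\log(2/\eps))=O(k^2\log(1/\eps))$ queries, a point $x\in[0,1]^k$ with $\|g(x)-x\|_\infty\le\eps/2$. Finally, I would repeat the triangle-inequality calculation from Observation~\ref{firstob}:
\[
\|f(x)-x\|_\infty\le\|f(x)-g(x)\|_\infty+\|g(x)-x\|_\infty\le\eps/2+\eps/2=\eps,
\]
where $\|f(x)-g(x)\|_\infty=(\eps/2)\|f(x)\|_\infty\le\eps/2$ since $f(x)\in[0,1]^k$. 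This certifies $x$ as an $\eps$-fixed point of the non-expansive $f$.

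There is essentially no obstacle here: the corollary is a direct black-box reduction, and the only thing to double-check is that the scaling trick in Observation~\ref{firstob} never needed the original contraction assumption on $f$ to begin with. The whole argument is a handful of lines once this is observed.
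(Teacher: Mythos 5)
Your proof is correct and mirrors the paper's reasoning exactly: the paper also observes that Observation~\ref{firstob} requires only non-expansiveness of $f$ (to conclude $g(x)=(1-\eps/2)f(x)$ is a $(1-\eps/2)$-contraction), and then invokes \Cref{theorem: main} on $\Contraction(\eps/2,\eps/2,k)$. Nothing is missing.
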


Another corollary of \Cref{theorem: main} is about finding a \emph{strong} $\eps$-fixed point
  in a contraction map $f$.
We say $x$ is a strong $\eps$-fixed point of $f$ if $\norm{x-x^*}\leq \eps$, where $x^*$ is the unique fixed point of $f$.
The following observation leads to \Cref{coro2}, where 
  $\StrongC(\eps,\gamma,k)$ denotes the problem of finding a strong $\eps$-fixed point:

\begin{observation}
Let $f$ be a $(1-\gamma)$-contraction map and $x^*$ be its unique fixed point.
	Let $x$ be any $(\eps\gamma)$-fixed point of $f$, i.e., $x$ satisfies $\norm{f(x)-x}\leq \eps\gamma$. Then we have $$\norm{x-x^*}\leq \norm{x-f(x)}+\norm{f(x)-x^*}\leq \eps\gamma+(1-\gamma)\norm{x-x^*},$$ which implies $\norm{x-x^*}\leq \eps$.
This gives a black-box reduction from $\StrongC(\eps,\gamma,k)$ to $\Contraction(\eps\gamma,\gamma,k)$, which is both query-efficient and time-efficient.
\end{observation}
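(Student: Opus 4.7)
The plan is to reduce $\StrongC(\eps,\gamma,k)$ to $\Contraction(\eps\gamma,\gamma,k)$ by showing that any sufficiently good (weak) $\eps'$-fixed point is automatically close to the true fixed point $x^*$, as long as $\eps'$ is small enough relative to the contraction parameter $\gamma$. The intuition is that a contraction map cannot allow a large gap between $x$ and $x^*$ when $x$ already nearly satisfies $f(x)=x$: any discrepancy would get amplified on the $x^*$ side by the non-contracting directions.

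The key inequality to establish is a triangle-inequality splitting
\[
\|x-x^*\|_\infty \;\le\; \|x - f(x)\|_\infty + \|f(x) - x^*\|_\infty.
\]
First I would bound the first term by $\eps\gamma$, using the hypothesis that $x$ is an $(\eps\gamma)$-fixed point. Then I would bound the second term using the defining property of the contraction together with the fact that $x^*=f(x^*)$: namely,
\[
\|f(x) - x^*\|_\infty \;=\; \|f(x) - f(x^*)\|_\infty \;\le\; (1-\gamma)\,\|x-x^*\|_\infty.
\]
Combining the two bounds gives $\|x-x^*\|_\infty \le \eps\gamma + (1-\gamma)\|x-x^*\|_\infty$, which after moving the $(1-\gamma)$ term to the left and dividing by $\gamma$ yields exactly $\|x-x^*\|_\infty \le \eps$.

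With this inequality in hand, the black-box reduction is immediate: on input $(f,\eps,\gamma)$ for $\StrongC$, simply invoke the $\Contraction$ oracle on $(f,\eps\gamma,\gamma)$ and return whatever point it produces. Each query to $f$ in the reduction corresponds to a single query in the underlying $\Contraction$ call, so the reduction is query-efficient (no overhead) and runs in constant additional time per query, hence time-efficient as well. There is no genuine obstacle here; the only point to be careful about is that the triangle-inequality step uses $x^* = f(x^*)$, which relies on Banach's theorem guaranteeing that the fixed point $x^*$ actually exists and is unique, and this is given by the contraction hypothesis on $f$.
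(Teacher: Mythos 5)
Your proof is correct and follows essentially the same approach as the paper: the triangle-inequality split $\norm{x-x^*}\le\norm{x-f(x)}+\norm{f(x)-x^*}$, bounding the first term by the $(\eps\gamma)$-fixed-point hypothesis and the second by contraction applied to the pair $(x,x^*)$, then rearranging to divide by $\gamma$. The reduction description also matches the paper's.
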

\begin{corollary}\label{coro2}
	There is an $O (k\log ( {1}/({\eps\gamma})))$-query algorithm for $\StrongC(\eps,\gamma,k)$.
\end{corollary}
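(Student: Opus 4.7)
The plan is simply to compose the black-box reduction stated in the observation immediately above with \Cref{theorem: main}. The observation establishes that every $(\eps\gamma)$-fixed point of a $(1-\gamma)$-contraction $f$ is automatically a strong $\eps$-fixed point, i.e., lies within $\ell_\infty$-distance $\eps$ of the unique fixed point $x^*$. So I would run the algorithm from \Cref{theorem: main} on the same oracle $f$ with accuracy parameter $\eps' := \eps\gamma$ in place of $\eps$, and return whatever point it outputs. The query count is $O(k^2\log(1/\eps')) = O(k^2\log(1/(\eps\gamma)))$, matching the statement of the corollary.

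There is essentially no obstacle. The only point worth articulating is that, in contrast to \Cref{firstob}, this reduction preserves the contraction parameter $\gamma$ rather than weakening it toward $\eps$; only the approximation tolerance is tightened from $\eps$ down to $\eps\gamma$, which is why the final bound reads $k^2\log(1/(\eps\gamma))$ instead of something with an extra $\gamma$ factor inside the logarithm. Correctness does not require locating or even estimating $x^*$: the inequality $\norm{x-x^*}\le \eps\gamma + (1-\gamma)\norm{x-x^*}$ from the observation is a deterministic consequence of the triangle inequality together with the contraction condition $\norm{f(x)-f(x^*)}\le (1-\gamma)\norm{x-x^*}$, and rearranging it gives $\norm{x-x^*}\le \eps$ for free on whichever $(\eps\gamma)$-fixed point the subroutine returns.
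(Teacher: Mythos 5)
Your proposal is correct and follows exactly the paper's argument: apply the observation preceding \Cref{coro2} to reduce $\StrongC(\eps,\gamma,k)$ to $\Contraction(\eps\gamma,\gamma,k)$, then invoke \Cref{theorem: main} with accuracy parameter $\eps\gamma$, yielding the stated $O(k^2\log(1/(\eps\gamma)))$ query bound.
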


In sharp contrast with \Cref{coro2}, however, we show that it is impossible to strongly approximate an exact fixed point in a non-expansive map over $[0,1]^2$ under the $\ell_\infty$ norm.

\begin{restatable}{theorem}{impossiblestrongappxnonexpanding}\label{thm:strong-appx}
There is no deterministic or randomized algorithm which, when given oracle access to any non-expansive map  $f: [0,1]^2 \mapsto [0,1]^2$ under the $\ell_{\infty}$-norm, computes in an expected bounded number of queries a point that is within distance $1/4$ of an exact fixed point of $f$.
\end{restatable}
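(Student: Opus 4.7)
The plan is to prove the lower bound via an adversary argument combined with Yao's minimax principle. If a randomized algorithm uses at most $K$ queries in expectation on every input, then by Markov's inequality, truncating it after $2K$ queries yields an algorithm that makes at most $2K$ queries on every input and still succeeds with probability at least $1/2$ on every input. Yao's principle then reduces the task to exhibiting a distribution $\mu$ over non-expansive maps $f:[0,1]^2\to[0,1]^2$ under which every deterministic $O(K)$-query algorithm fails to output a point within $\ell_\infty$-distance $1/4$ of some fixed point with probability strictly greater than $1/2$.

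To design $\mu$, I would construct a parameterized family $\{f_t\}$ of non-expansive maps such that (a) for some $t_0,t_1$ in the support, the fixed-point sets of $f_{t_0}$ and $f_{t_1}$ lie at $\ell_\infty$-distance greater than $1/2$ (so no point in $[0,1]^2$ can be within $1/4$ of both), and (b) the value of $f_t$ at any particular point gives as little information about $t$ as possible. The naive attempt of a contraction family such as $f_t(x,y)=((x+t)/2,(y+t)/2)$ fails because a single query reveals $t$. Instead, I would try to design the family so that all $f_t$ coincide with a fixed base map outside a small, $t$-dependent region; then, randomizing $t$ and the location of the region, a union bound shows that with probability at least $1/2$ no query lands inside the modified region. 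Conditioned on that event all queries return the base response, so the algorithm's output is the same for all $t$ in the support, but since the fixed points of different $f_t$ vary over a region of $\ell_\infty$-diameter greater than $1/2$, that single output can be within $1/4$ of the fixed point for at most a fraction of the parameters, giving the contradiction.

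The main obstacle is executing (a) and (b) simultaneously under the $\ell_\infty$ non-expansiveness constraint. The coordinatewise bound $\|\nabla f_i\|_1\le 1$ tightly couples local modifications to global behaviour: shifting a fixed point by $\Delta$ generically requires a modification region of $\ell_\infty$-size at least $\Delta$, so producing fixed points more than $1/2$ apart across instances seems to force a modification region of area comparable to the whole square, which is easy for queries to hit. I would try to circumvent this by starting from a ``slack'' base map whose fixed-point set is already a whole curve (for example the diagonal of $f_{\text{base}}(x,y)=((x+y)/2,(x+y)/2)$), and then exploiting the fact that arbitrarily small perturbations to such a map can collapse its fixed set to isolated points far apart along that curve; encoding this collapse into a localized modification indexed by $t$ while verifying non-expansiveness and keeping the modified region small relative to the query budget is the technical crux I expect to be the hardest step.
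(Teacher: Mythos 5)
Your high-level intuition is in the right direction, but there is a genuine gap in the two places you yourself flag as the crux, and one of the two planned approaches is provably impossible.

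First, the ``all $f_t$ coincide with a fixed base map outside a small, $t$-dependent region'' plan cannot work under $\ell_\infty$ non-expansiveness when the modified regions are small and essentially disjoint and each $f_t$ has its fixed point inside its own region $R_t$. By Brouwer, the base map has a fixed point $q$. If $q$ lies outside $\bigcup_t R_t$ then $q$ is a fixed point of every $f_t$ and the algorithm can always output $q$; if $q\in R_{t_0}$ for some $t_0$, then for every $t\neq t_0$ we have $q\notin R_t$ and hence $f_t(q)=f_{\mathrm{base}}(q)=q$, so again $q$ is a fixed point of all but one $f_t$. Either way the adversary fails. So the ``zero information outside the region'' picture is unattainable; queries far from the hidden region must reveal \emph{something}.

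Second, because of the above, the union bound over a uniformly random hidden region is the wrong lower-bound structure. In the paper's construction (which realizes essentially your ``slack base map'' fallback, in $45^\circ$-rotated coordinates on a diamond domain), a query far from the hidden thin strip still reveals exactly one bit: on which side of the strip the query lies. An adaptive algorithm can therefore binary-search for the strip, so a union bound gives nothing; the correct statement is a $\log N$ lower bound for (randomized) search in a sorted array of $N$ items, with $N$ taken arbitrarily large by making the strips arbitrarily thin. The paper makes this precise by placing two candidate fixed points $s_x,t_x$ more than $1/2$ apart in each strip $S_x$, showing that until the algorithm queries inside $S_x$ its transcript is consistent with both, and then invoking the randomized lower bound for sorted search. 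Your write-up does not have this one-bit/binary-search insight, does not construct an explicit non-expansive family (you explicitly defer this), and does not resolve the difficulty of verifying $\ell_\infty$ non-expansiveness for the perturbation (which the paper handles via a ``check only diagonal pairs'' lemma). So as it stands the proposal is a plan with the hardest steps still open and one of the two sub-plans ruled out.
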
 

The impossibility of strong approximation for expansive maps under the $\ell_\infty$ norm, 
i.e. maps with Lipschitz constant $L>1$,
was shown previously by \cite{SS03}, and it was conjectured that the same fact should hold 
also for non-expansive maps ($L=1$). For the $\ell_2$ norm, impossibility of
strong approximation for non-expansive maps was shown earlier \cite{Si01}.

The problem $\Contraction(\eps,\gamma,k)$ is a \emph{promise problem}, i.e.,
it is promised that the function $f$ in the black-box (the oracle) is a $(1-\gamma)$-contraction.
In the various relevant applications (stochastic games etc.), the corresponding function that is induced is by construction a contraction, thus it is appropriate in these cases to restrict attention to functions that satisfy the contraction promise.

For any promise problem, one can define a corresponding total
search problem, where the black-box can be any function $f$ on the domain and the problem is to compute either a solution or a violation of the promise.
In our case, the corresponding total search problem, denoted
$\TContraction(\eps, \gamma, k)$, is the problem of computing for a given
function $f: [0,1]^k \mapsto [0,1]^k$ either an $\eps$-fixed point
or a violation of the contraction property, i.e. a pair of points $x,y \in [0,1]^k$
such that $\norm{f(x)-f(y)} > (1-\gamma)\norm{x-y}$.
For any promise problem, the corresponding total search problem is clearly at least as hard as the promise problem.
For some problems it can be strictly harder (and it may depend on the type of
violation that is desired). However, we show that in our case the
two versions have the same query complexity.

\begin{restatable}{theorem}{totalversion}\label{thm: total search version}
	There is an $O (k\log ({1}/{\eps}))$-query algorithm for $\TContraction(\eps,\gamma,k)$.
\end{restatable}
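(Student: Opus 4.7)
The plan is to invoke the promise-version algorithm of \Cref{theorem: main} as a black box and extract a contraction violation from its query log whenever the returned candidate fails to be an $\eps$-fixed point. Concretely, following \Cref{firstob}, I would run the $O(k^2\log(1/\eps))$-query algorithm on the derived map $g(x):=(1-\eps/2)f(x)$, simulating each query to $g$ by one query to $f$. Let $x_1,\ldots,x_N$ with $N=O(k^2\log(1/\eps))$ denote the queried points and $x^*$ the returned point; by paying at most one extra query I may assume $x^*\in\{x_1,\ldots,x_N\}$. I then check directly whether $\|f(x^*)-x^*\|_\infty\le\eps$; if so, I return $x^*$. Otherwise I scan all $O(N^2)$ ordered pairs in post-processing and output any $(x_i,x_j)$ satisfying $\|f(x_i)-f(x_j)\|_\infty>(1-\gamma)\|x_i-x_j\|_\infty$ as a certificate of contraction violation. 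Since the scan uses no new queries, the overall query complexity remains $O(k^2\log(1/\eps))$.

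The technical heart is to show that whenever $\|f(x^*)-x^*\|_\infty>\eps$, such a violating pair is guaranteed to exist among the queried points; in fact I will exhibit a pair with $\|f(x_i)-f(x_j)\|_\infty>\|x_i-x_j\|_\infty$, which trivially violates any $(1-\gamma)$-contraction for $\gamma\in(0,1]$. Suppose for contradiction that every queried pair instead satisfies $\|g(x_i)-g(x_j)\|_\infty\le(1-\eps/2)\|x_i-x_j\|_\infty$. Applying the $\ell_\infty$ McShane extension coordinate-wise and then clamping each coordinate back into $[0,1]$ (an $\ell_\infty$-nonexpansive operation that fixes any point already in $[0,1]^k$, in particular each $g(x_i)\in[0,1-\eps/2]^k$) produces a genuine $(1-\eps/2)$-contraction $\widehat g:[0,1]^k\to[0,1]^k$ with $\widehat g(x_i)=g(x_i)$ for every $i$. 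Because the promise algorithm is deterministic, running it on $\widehat g$ generates the identical transcript and identical output $x^*$; its correctness guarantee on the bona-fide contraction $\widehat g$ then yields $\|\widehat g(x^*)-x^*\|_\infty\le\eps/2$, and since $x^*$ was queried we have $\widehat g(x^*)=g(x^*)$, so $\|g(x^*)-x^*\|_\infty\le\eps/2$ and hence $\|f(x^*)-x^*\|_\infty\le\eps$ by \Cref{firstob}, a contradiction.

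The one delicate step is the implicit use of determinism of the main algorithm: it is needed to conclude that the transcript on $\widehat g$ coincides with the transcript on $f$. I anticipate the construction in \Cref{sec:mainalg} to be deterministic, but even in a randomized setting the argument goes through after conditioning on the random bits of any execution on which the verification check fails. Everything else—the coordinate-wise McShane extension, the $\ell_\infty$-nonexpansiveness of coordinate clamping, and the final chain of inequalities—is standard, so the total search problem inherits the query bound of its promise counterpart essentially for free.
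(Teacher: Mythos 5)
Your proposal is correct and follows essentially the same route as the paper. The paper's Lemma~\ref{lem:total} constructs the consistent contraction explicitly as $f(x)_i=\min_{t}\{(1-\gamma)\norm{x-q^t}+a^t_i\}$ (clamped to $[0,1]$), which is precisely the coordinate-wise infimal McShane extension you invoke; the contrapositive/determinism coupling argument and the final extraction of a violating pair from the query transcript likewise mirror the paper's reasoning, with your routing through Observation~\ref{firstob} being an inessential repackaging.
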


Similar results hold for $\NonExp(\eps,k)$ and $\StrongC(\eps, \gamma,k)$:
the total search versions have the same query complexity as the corresponding promise problems.

\vspace{0.2cm}
\noindent\textit{Remark.} It is also important to note that while our algorithm in \Cref{theorem: main} is query-efficient, we do not currently see a way of making it time-efficient. The algorithm guarantees that within polynomial queries we can find an $\eps$-fixed point, but each iteration requires a brute force procedure to \emph{determine the next query point}. 
We will explain more details of techniques in \Cref{section: Sketch of the Query Algorithm}.

\vspace{0.2cm}
\noindent {\bf Other Related Work.}
We have already mentioned the most relevant works addressing the query complexity
of computing the fixed point of a contraction map.
For continuous functions $f: [0,1]^k \mapsto [0,1]^k$ that have Lipschitz constant greater than 1 (i.e. are expansive), there are exponential lower
bounds on the query complexity of computing a (weak) approximate fixed point \cite{HPV89,CD08}. 
In a succinct black-box model, it is known that the query complexity of $\TFNP$ problems is polynomial in the description size of the object in the box \cite{KNY19}. Their results do not apply to our setting. For instance, in the $\TContraction(\eps,\gamma,k)$ problem, the objects are essentially all continuous functions over $[0,1]^k$, whose description size far exceeds an exponential function of $k$.

$\Contraction(\eps,\gamma,k)$ when considered in the white-box model\footnote{The white-box model refers to the model where the function is explicitly given by a polynomial-size circuit, in contrast to the black-box model where the function can only be accessed via an oracle as we studied in this paper. When we talk about a computational problem under the white-box model, we measure the efficiency by the time complexity.} can be formulated as a total search problem so that it lies in the class $\TFNP$. In fact, it is one of the motivating problems in~\cite{DP11} to define the class $\CLS$ for capturing problems that lie in both $\PLS$~\cite{JPY88} and $\PPAD$~\cite{Pap94}. Later, the problem of computing the exact fixed point of Piecewise-Linear contraction maps, an easier variant of $\Contraction(\eps,\gamma,k)$, was placed in $\UEOPL$~\cite{FGMS20}, a subclass of $\CLS$ to capture problems with a unique solution. It is not known that $\Contraction(\eps,\gamma,k)$ is complete for any $\TFNP$ class. Notably, to the best of our knowledge, for the known fixed point problems that are complete for some $\TFNP$ class, their query complexity in the black-box model is exponential. 
Examples of such well-known problems include $\PPAD$-complete problems $\textsc{Brouwer}$ and $\textsc{Sperner}$~\cite{Pap94,CD09}, $\PPA$-complete problems $\textsc{Borsuk–Ulam}$, $\textsc{Tucker}$~\cite{Pap94,ABB2020} and $\textsc{M{\"o}biusSperner}$~\cite{deng2021understanding}, $\CLS$-complete problems $\textsc{KKT}$~\cite{FGHS23} and $\textsc{MetricBanach}$\footnote{$\textsc{MetricBanach}$ refers to the problem of computing an approximate fixed point of a contraction map where the distance function $d$ is also part of the input.}~\cite{DTZ18}, and $\UEOPL$-complete problem $\textsc{OPDC}$~\cite{FGMS20}. 

However, our results indicate that $\Contraction(\eps,\gamma,k)$ is dramatically different from all these fixed point problems above in terms of query complexity. 
Indeed, it is a necessary step towards obtaining a polynomial-time general-purpose (i.e., black-box) algorithm for $\Contraction(\eps,\gamma,k)$, which, ideally if true, would imply many breakthroughs in the fields of verification, semantics, learning theory, and game theory as we discussed before.

\subsection{Sketch of the Main Algorithm}
\label{section: Sketch of the Query Algorithm}

We give a high-level sketch of the main query algorithm for \Cref{theorem: main}.
We start by discretizing the search space. 
Let $g:[0,n]^k \mapsto [0,n]^k$ with $g(x):=n\cdot f(x/n)$ and $n:=\lceil 16/(\gamma\eps)\rceil$.
It is easy to show that $g$ remains a $(1-\gamma)$-contraction over $[0,n]^k$ and it 
  suffices to find a $(16/\gamma)$-fixed point of $g$.
Moreover, by rounding the unique fixed point $x^*$ of $g$ to an integer point,
  we know trivially that there exists at least one point $x$ in $\EVEN(n,k)$\footnote{Looking ahead, the usage of $\EVEN(n,k)$ will guarantee that all queried points lie in the normal grid $\set{0,\cdots,n}^k$, which implies that the approximate fixed point that we will find lies in the normal grid $\set{0,\cdots,n}^k$ as well.}, which is the set of integral points in the grid $\{0,1,\ldots,n\}^k$ with all coordinates being even, that satisfies $\|x-x^*\|_\infty\le 1$. Furthermore, 
  it is easy to show that any such point $x$ must be a $(16/\gamma)$-fixed point.
So our goal is to find a point $x\in \EVEN(n,k)$ that satisfies $\|x-x^*\|_\infty \le 1$ query-efficiently.

To this end, we use $\Cand^t$ to denote the set of $\EVEN(n,k)$ that remains possible to
  be close to the unknown exact fixed point $x^*$ of $g$.
Starting with $\Cand^0$ set to be the full set $\EVEN(n,k)$, the success of the
  algorithm relies on whether we can cut down the size of $\Cand^t$ efficiently.
For this purpose we prove a number of geometric lemmas in \Cref{section: geometry} to 
  give a characterization of the exact fixed point $x^*$, which lead to the following primitive
  used by the algorithm repeatedly:
\begin{flushleft}\begin{quote}
Given $x\in [0,n]^k$, $i\in [k]$ and $\phi\in \{\pm 1\}$, we write $\calP_i(x,\phi)$ to denote
  the set of points $y\in [0,n]^k$ such that $\phi\cdot (y_i-x_i)=\|y-x\|_\infty$, where $\calP$ is a shorthand for pyramid; see \Cref{figure: 11} for illustrations.
Then after querying a point $a\in [0,n]^k$, either $a$ was found to be a $(16/\gamma)$-fixed point
  (in which case the algorithm is trivially done), or one can find $\phi_i\in \{\pm 1\}$ for 
  each $i\in [k]$ such that no point in $\calP_i(a,\phi_i)$ can be close (within $\ell_\infty$-distance $1$) to $x^*$ (in which case we can update $\Cand^t$ by removing all points
  in $\cup_{i\in [k]} \calP_i(a,\phi_i)$). 	
\end{quote}\end{flushleft}

Given this, it suffices to show that for any set of points $T\subseteq \EVEN(n,k)$ (as $\Cand^t$),
  there exists a point $a$ to be queried such that for any $\phi_i\in \{\pm 1\}$:
$$
\left|T\cap\left(\bigcup_{i\in [k]} \calP_i(a,\phi_i)\right)\right|
$$
is large relative to $|T|$. Quantitatively, we show that there exists a point $a$ such that for any $\phi_i\in \{\pm 1\}$,  the set $T\cap\left(\bigcup_{i\in [k]} \calP_i(a,\phi_i)\right)$ has size at least half of $|T|$. This bound is clearly sharp and we refer to such point $a$ as \emph{balanced point}.

To prove the existence of a balanced point,
  we construct an infinite sequence of continuous maps $\{f^t\}$ that 
  can be viewed as relaxed versions of the search for a balanced point.
Using Brouwer's fixed point theorem, every map $f^t$ has a fixed point $p^t$ and 
  thus, by the Bolzano--Weierstrass theorem, there must be an infinite subsequence of $\{p^t\}$ that converges. Letting $p^*\in[0,n]^k$ be the point it converges to, we can show that $p^*$ must be a desired balanced point.  
We could further round $p^*$ to $q^*\in \set{0,\dots,n}^k$ and show that the latter is a balanced point in the grid. 
While we show such a point always exists,
  the brute-force search to find $q^*\in \set{0,\cdots,n}^k$ is the reason why our algorithm is not time-efficient.

\begin{figure}[t]

    \centering%
\begin{subfigure}[b]{0.45\textwidth}
    \centering
    \begin{tikzpicture}[scale=2]
        \coordinate (A) at (0,0);
        \coordinate (B) at (2,0);
        \coordinate (C) at (2,2);
        \coordinate (D) at (0,2);
        \coordinate (Center2D) at (1,1);

        \fill[cyan, opacity=0.5] (A) -- (D) -- (Center2D) -- cycle;
        \draw[thick] (A) -- (B);
        \draw[thick] (B) -- (C);
        \draw[thick] (C) -- (D);
        \draw[thick] (D) -- (A);

        \draw[thick, dotted] (Center2D) -- (A);
        \draw[thick, dotted] (Center2D) -- (B);
        \draw[thick, dotted] (Center2D) -- (C);
        \draw[thick, dotted] (Center2D) -- (D);

        \node[below] at (Center2D) {$x$};
    \end{tikzpicture}
    \caption{The light blue region is given by $\calP_1(x,-1)$ where $x$ is the center point.}
\end{subfigure}\hfill
\begin{subfigure}[b]{0.45\textwidth}
    \centering
    \begin{tikzpicture}[scale=2]
        \coordinate (A) at (0,0);
        \coordinate (B) at (2,0);
        \coordinate (C) at (2,2);
        \coordinate (D) at (0,2);
        \coordinate (E) at (-0.5,0,2);
        \coordinate (F) at (1.5,0,2);
        \coordinate (G) at (1.5,2,2);
        \coordinate (H) at (-0.5,2,2);
        \coordinate (Center3D) at (0.75,1.0,1.0);
        
        \fill[cyan, opacity=0.5] (A) -- (Center3D) -- (D) -- (H) -- (E) -- cycle;

        \draw[thick] (A) -- (B);
        \draw[thick] (B) -- (C);
        \draw[thick] (C) -- (D);
        \draw[thick] (D) -- (A);
        \draw[thick] (E) -- (F);
        \draw[thick] (F) -- (G);
        \draw[thick] (G) -- (H);
        \draw[thick] (H) -- (E);
        \draw[thick] (A) -- (E);
        \draw[thick] (B) -- (F);
        \draw[thick] (C) -- (G);
        \draw[thick] (D) -- (H);

        \draw[thick, dotted] (Center3D) -- (A);
        \draw[thick, dotted] (Center3D) -- (B);
        \draw[thick, dotted] (Center3D) -- (C);
        \draw[thick, dotted] (Center3D) -- (D);
        \draw[thick, dotted] (Center3D) -- (E);
        \draw[thick, dotted] (Center3D) -- (F);
        \draw[thick, dotted] (Center3D) -- (G);
        \draw[thick, dotted] (Center3D) -- (H);

        \node[below=0.3cm] at (Center3D) {$x$};
    \end{tikzpicture}
    \caption{The light blue region is given by $\calP_1(x,-1)$ where $x$ is the center point.}
\end{subfigure}
    \caption{Pyramids in two-dimensional and three-dimensional cubes.}\label{figure: 11}
\end{figure}

\section{Preliminaries}
\begin{definition}[Contraction]
Let $0<\gamma<1$ and $(\calM,d)$ be a metric space. A map $f:\calM\mapsto\calM$ is a $(1-\gamma)$-\emph{contraction map} with respect to $(\calM,d)$ if $d(f(x),f(y))\leq (1-\gamma)\cdot d(x,y)$ for all $x,y\in \calM$.

A map $f:\calM\mapsto\calM$ is said to be \emph{non-expansive} if 
  $d(f(x),f(y))\leq d(x,y)$ for all $x,y\in \calM$.
\end{definition}

Every contraction map has a unique fixed point, i.e., $x^*$ with $f(x^*)=x^*$ by Banach's theorem. Every non-expansive map over $[0,1]^k$ with respect to the infinite norm has a (not necessarily unique) fixed point by Brouwer's theorem.
In this paper, we study the query complexity of finding an $\eps$-fixed point of a $(1-\gamma)$-contraction map $f$  over the $k$-cube $[0,1]^k$ with respect to the infinity norm:

\begin{definition}[$\Contraction(\eps,\gamma,k)$]
In problem $\Contraction(\eps,\gamma,k)$, we are given 
  oracle access to a $(1-\gamma)$-contraction map $f$ over $[0,1]^k$ with respect to the infinity norm, i.e., $f$ satisfies
$$
\|f(x)-f(y)\|_\infty\le (1-\gamma)\cdot \|x-y\|_\infty,\quad \text{for all $x,y\in [0,1]^k$}
$$
and the goal is to find an $\eps$-fixed point of $f$, i.e., a point $x \in [0,1]^k$ such that $\norm{f(x )-x }\leq \eps$.

We also write $\StrongC(\eps,\gamma,k)$ to denote the problem with the same input but
  the goal is to find a \emph{strong $\eps$-fixed point} of $f$, i.e., $x\in [0,1]^k$
  such that $\|x-x^*\|_\infty\le \eps$, where $x^*$ is the unique fixed point of $f$. 
\end{definition}

We define similar problems for non-expansive maps over $[0,1]^k$:

\begin{definition}[$\NonExp(\eps,k)$]
In problem $\NonExp(\eps,k)$, we are given 
  oracle access to a non-expansive map $f :[0,1]^k\rightarrow [0,1]^k$ with respect to the infinity norm, i.e., $f$ satisfies
$$
\|f(x)-f(y)\|_\infty\le \|x-y\|_\infty,\quad \text{for all $x,y\in [0,1]^k$}
$$
and the goal is to find an $\eps$-fixed point of $f$. 

We write $\StrongNonExp(\eps,k)$ to denote the problem with the same input but
  the goal is to find a \emph{strong $\eps$-fixed point} of $f$, namely, a point $x$ such that $\|x-x^*\|_\infty\le \eps$, where $x^*$ is any fixed point of $f$.
\end{definition}

Let $f:[0,1]^k\rightarrow [0,1]^k$ be a $(1-\gamma)$-contraction map.
For convenience, our main algorithm for $\Contraction(\eps,\gamma,k)$   will~work on $g:\cube\mapsto\cube$ 
  with $$n:=\left\lceil \frac{16}{\gamma\eps}\right\rceil
  \quad\text{and}\quad g(x)\coloneqq n\cdot f(x/n),\quad\text{for all $x\in\cube$}.$$ 
We record the following simple property about $g$:
\begin{lemma}\label{basiclemma}
	The map $g$ constructed above is still a $(1-\gamma)$-contraction and finding an $\eps$-fixed point of $f$ reduces to finding a $({16}/{\gamma})$-fixed point of $g$.
\end{lemma}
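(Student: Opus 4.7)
The plan is to verify both claims by direct scaling arguments; there is no real obstacle here beyond careful bookkeeping with the factor $n$. First I would check that $g$ is well-defined as a map $[0,n]^k \to [0,n]^k$: for any $x \in [0,n]^k$ we have $x/n \in [0,1]^k$, so $f(x/n) \in [0,1]^k$ and therefore $g(x) = n \cdot f(x/n) \in [0,n]^k$.

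Next, for the contraction property, I would compute for arbitrary $x,y \in [0,n]^k$:
\[
\|g(x) - g(y)\|_\infty \;=\; n\,\bigl\|f(x/n) - f(y/n)\bigr\|_\infty \;\leq\; n\,(1-\gamma)\,\|x/n - y/n\|_\infty \;=\; (1-\gamma)\,\|x-y\|_\infty,
\]
using the $(1-\gamma)$-contraction property of $f$ on $[0,1]^k$ and the positive homogeneity of the $\ell_\infty$-norm. This shows $g$ is a $(1-\gamma)$-contraction on $[0,n]^k$.

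For the reduction, suppose we are handed $y \in [0,n]^k$ with $\|g(y) - y\|_\infty \leq 16/\gamma$. Setting $x := y/n \in [0,1]^k$, I would compute
\[
\|f(x) - x\|_\infty \;=\; \bigl\|f(y/n) - y/n\bigr\|_\infty \;=\; \frac{1}{n}\,\bigl\|n\,f(y/n) - y\bigr\|_\infty \;=\; \frac{1}{n}\,\|g(y) - y\|_\infty \;\leq\; \frac{16}{\gamma n}.
\]
By the choice $n = \lceil 16/(\gamma\eps) \rceil \geq 16/(\gamma\eps)$, the right-hand side is at most $\eps$, so $x = y/n$ is an $\eps$-fixed point of $f$. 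Since computing $x$ from $y$ takes no queries to $f$ (and each query to $g$ can be simulated by a single query to $f$), this yields the desired black-box reduction.
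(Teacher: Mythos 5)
Your proof is correct and follows essentially the same direct scaling computation as the paper: verify the contraction property of $g$ by pulling the factor $n$ through the $\ell_\infty$-norm, and then show a $(16/\gamma)$-fixed point of $g$ divided by $n$ gives an $\eps$-fixed point of $f$ using $n \geq 16/(\gamma\eps)$. The only addition is the (trivial but welcome) remark that $g$ is well-defined as a self-map of $[0,n]^k$.
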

\begin{proof}
	For any two points $x,y\in\cube$, we have
	\begin{align*}
	   \norm{g(x)-g(y)} & = \norm{n\cdot f(x/n)-n\cdot f(y/n)}
	     \leq n (1-\gamma) \cdot \norm {x/n-y/n} 
	     	     =(1-\gamma)\norm{x-y}.
	\end{align*}
Suppose we found a $( {16}/{\gamma})$-fixed point $a$ of $g$. We show that $x\coloneqq a/n$ is an $\eps$-fixed point of $f$:
	\begin{align*}
	   \norm{f(x)-x} & = \norm{g(a)/n-a/n}=\frac{1}{n}\cdot \norm{g(a)-a} \leq \frac{1}{n}\cdot \frac{16}{\gamma} \leq \eps.
	\end{align*}
This finishes the proof of the lemma.
\end{proof}

\noindent \textbf{Notation.} Given a positive integer $m$, we use $[m]$ to denote $\{1,\ldots,m\}$ and $[0:m]$ to denote $\set{0,\ldots,m}$. For a real number $t\in \mathbb{R}$, we let $\sgn(t)=1$ if $t>0$, $\sgn(t)=-1$ if $t<0$, and $\sgn(t)=0$ if $t=0$. 
Given positive integers $n$ and $k$, we use $\EVEN(n,k)$ to denote the set of all integer points $x\in \cube$ such that $x_i$ is even for all $i\in[k]$.

For a point $x\in \mathbb{R}^k$ (not necessarily in $[0,n]^k$), a coordinate $i\in[k]$, and a sign $\phi\in\set{\pm 1}$, we use $\calP_i(x,\phi)$ to denote 
$$\calP_i(x,\phi):=\set{y\in\cube: \phi\cdot(y_i-x_i)= \norm{y-x}},$$ where $\calP$ is a shorthand for pyramid.

Given a $(1-\gamma)$-contraction map $g$ over $\cube$, we use $\Fix(g)$ to denote the unique fixed point of $g$. For any point $x\in\cube$, we use $\Around(x)$ to denote the set 
$$\Around(x):=\big\{y\in[0,n]^k: \norm{x-y}\leq 1\big\}.$$ We note that $\norm{g(y)-y}\leq 2$ for all $y\in\Around(\Fix(g))$ and thus, any point in $\Around(\Fix(g))$ is a desired $({16}/{\gamma})$-fixed point (given that $\gamma<1$). To see this, letting $x=\Fix(g)$, we have 
$$\norm{g(y)-y}\leq \norm{g(y)-g(x)}+\norm{x-y}\leq (2-\gamma)\norm{x-y}\leq 2,
\quad\text{for any $y\in \Around(x)$}.$$ Note that for any $x\in[0,n]^k$, $\Around(x)\cap \EVEN(n,k)$ is non-empty.

\section{Characterizing the Unique Fixed Point}
\label{section: geometry}
In this section, we prove three geometrical lemmas for characterizing the exact fixed point $\Fix(g)$. The corresponding geometrical illustrations are in \Cref{figure: lemma 2}, \Cref{figure: lemma 3}, and \Cref{figure: lemma 4}.

\begin{lemma}\label{lemma: region of fixed point}
	Let $g:\cube\mapsto\cube$ be a $(1-\gamma)$-contraction map and let $a\in\cube$ be a point such that $\norm{g(a)-a}> {16}/{\gamma}$  and $s\in\set{\pm 1,0}^k$ be the sign vector such that $s_i=\sgn(g(a)_i-a_i)$. Then $$\emph{\Fix}(g)\in \bigcup_{i:s_i\neq 0}\calP_i(a+4s,s_i).$$
\end{lemma}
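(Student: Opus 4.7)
The plan is to set $x^* := \Fix(g)$ and introduce the vectors $v := x^* - a$, $u := g(a) - a$, and $w := v - u$. Since $x^*$ is fixed, $w = g(x^*) - g(a)$, so the contraction property yields $\norm{w} \leq (1-\gamma)\norm{v}$. Combining this with the triangle inequality $\norm{u} \leq \norm{v} + \norm{w} \leq (2-\gamma)\norm{v}$ and the hypothesis $\norm{u} > 16/\gamma$ gives the quantitative bound
$$
R^* := \norm{v} \geq \frac{\norm{u}}{2-\gamma} > \frac{8}{\gamma},
$$
so in particular $\gamma R^* > 8$. The goal reduces to exhibiting a coordinate $i$ with $s_i \neq 0$ and $s_i(v_i - 4 s_i) = \norm{v - 4s}$, since this is exactly the condition for $x^* \in \calP_i(a + 4s, s_i)$.

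I would choose $k^*$ to be any coordinate with $|v_{k^*}| = R^*$ and set $\sigma := \sgn(v_{k^*}) \in \set{\pm 1}$. The bound $|w_{k^*}| \leq (1-\gamma)R^* < R^* = |v_{k^*}|$ implies that the perturbation cannot flip the sign of $v_{k^*}$ in $u_{k^*} = v_{k^*} - w_{k^*}$, so $s_{k^*} = \sgn(u_{k^*}) = \sigma \neq 0$. Consequently $v_{k^*} - 4 s_{k^*} = (R^* - 4)\,s_{k^*}$, which has sign $s_{k^*}$ and magnitude $R^* - 4$ since $R^* > 4$. It then suffices to show $|v_j - 4 s_j| \leq R^* - 4$ for every $j$, for then $k^*$ attains $\norm{v - 4 s}$ with the correct sign.

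The three cases I would run are as follows. If $s_j = 0$, then $u_j = 0$, so $v_j = w_j$ and $|v_j - 4s_j| = |v_j| \leq (1-\gamma)R^* < R^* - 8$. If $s_j \neq 0$ and $\sgn(v_j) = s_j$, then $v_j - 4 s_j = (|v_j| - 4)\,s_j$, whose magnitude is $\bigl||v_j| - 4\bigr| \leq \max(R^* - 4,\, 4) = R^* - 4$, using $R^* > 8$. The delicate case is $s_j \neq 0$ with $\sgn(v_j) \neq s_j$ (including $v_j = 0$): here $u_j$ and $v_j$ disagree in sign, which forces $|w_j| = |v_j - u_j| \geq |v_j|$ and hence $|v_j| \leq (1-\gamma)R^*$; the shift then gives $|v_j - 4 s_j| \leq |v_j| + 4 \leq (1-\gamma)R^* + 4 \leq R^* - 4$, by $\gamma R^* > 8$.

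The only genuine obstacle is this third case: a priori the $\ell_\infty$-distance from $x^*$ to $a + 4 s$ could be attained where the shift by $4 s_j$ \emph{increases} $|v_j|$ because $v_j$ points opposite to $s_j$. The escape is the observation that any such sign flip between $u_j$ and $v_j$ forces $|v_j| \leq (1-\gamma)R^*$, which is strictly less than $R^* - 4$ precisely because $\gamma R^* > 8$. This also explains the $16/\gamma$ threshold in the hypothesis: it provides exactly the slack $\gamma R^* > 8$ needed to beat the $\pm 4$ shift in both directions.
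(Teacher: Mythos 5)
Your proof is correct, and it takes a genuinely different (and in my view slightly cleaner) route than the paper's. The paper argues by exclusion: it shows that any point $x$ that lies outside $\bigcup_{i:s_i\neq 0}\calP_i(a+4s,s_i)$ and has $\norm{x-a}>8/\gamma$ cannot be a fixed point, by choosing a dominating coordinate of $x-(a+4s)$ and deriving $g(x)_j\neq x_j$. You instead argue by direct inclusion: set $v = x^*-a$, pick the dominating coordinate $k^*$ of $v$, show that the contraction bound $\norm{w}\le(1-\gamma)\norm{v}$ forces $s_{k^*}=\sgn(v_{k^*})\neq 0$, and then verify coordinate by coordinate that $|v_j-4s_j|\le R^*-4 = |v_{k^*}-4s_{k^*}|$, which pins $x^*$ inside the specific pyramid $\calP_{k^*}(a+4s,s_{k^*})$. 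Both proofs share the opening triangle-inequality step giving $R^*>8/\gamma$, and both ultimately run a case split on whether $s_j$ is zero, agrees with, or disagrees with the local displacement sign. The advantage of your version is that it identifies concretely which pyramid captures $x^*$ and avoids the contrapositive framing; the paper's version, on the other hand, directly yields a slightly stronger statement (every far-away point outside the region is moved by $g$), which is immaterial here but could be reused. The case analysis is complete and the inequalities all check out; in particular the delicate sign-flip case (your third) correctly uses $|w_j|\ge|v_j|$ to force $|v_j|\le(1-\gamma)R^*$, and the margin $\gamma R^*>8$ absorbs the $\pm 4$ shift with room to spare.
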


\begin{figure}

\begin{subfigure}[b]{0.45\textwidth}\centering

    \begin{tikzpicture}[scale=0.5]
        \coordinate (A) at (0,0);
        \coordinate (B) at (10,0);
        \coordinate (C) at (10,10);
        \coordinate (D) at (0,10);
        \coordinate (E) at (0,8);
        \coordinate (F) at (8,0);
        \coordinate (G) at (1,0);
        \coordinate (H) at (10,9);
        \coordinate (I) at (0,10);
        \coordinate (J) at (10,0);
        \coordinate (a) at (4.5,3.5);
        \coordinate (ga) at (5.5,7);
        \coordinate (aplus4) at (5.5,4.5);

        \fill[cyan, opacity=0.5] (I) -- (D) -- (C) -- (B) -- (J) -- cycle;
        \draw[thick] (A) -- (B);
        \draw[thick] (B) -- (C);
        \draw[thick] (C) -- (D);
        \draw[thick] (D) -- (A);

        \draw[thick, dotted] (G) -- (H);
        \draw[thick, dotted] (E) -- (F);

        \node[below] at (a) {$a$};
        \node[above] at (ga) {$g(a)$};
        \node[right] at (aplus4) {$a+4s$};
        \fill[black] (a) circle (0.1);
        \fill[black] (ga) circle (0.1);
        \fill[black] (aplus4) circle (0.1);
    \end{tikzpicture}

    \caption{$a$ and $g(a)$ satisfy $s=(+1,+1)$.}
\end{subfigure}\hfill
\begin{subfigure}[b]{0.45\textwidth}\centering%

    \begin{tikzpicture}[scale=0.5]
        \coordinate (A) at (0,0);
        \coordinate (B) at (10,0);
        \coordinate (C) at (10,10);
        \coordinate (D) at (0,10);
        \coordinate (E) at (0,8);
        \coordinate (F) at (8,0);
        \coordinate (G) at (1,0);
        \coordinate (H) at (10,9);
        \coordinate (I) at (0,9);
        \coordinate (J) at (10,10);
        \coordinate (a) at (4.5,3.5);
        \coordinate (ga) at (4.5,7);
        \coordinate (aplus4) at (4.5,4.5);

        \fill[cyan, opacity=0.5] (D) -- (I) -- (aplus4) -- (J) -- (C) -- cycle;
        \draw[thick] (A) -- (B);
        \draw[thick] (B) -- (C);
        \draw[thick] (C) -- (D);
        \draw[thick] (D) -- (A);

        \draw[thick, dotted] (G) -- (H);
        \draw[thick, dotted] (E) -- (F);

        \node[below] at (a) {$a$};
        \node[above] at (ga) {$g(a)$};
        \node[right] at (aplus4) {$a+4s$};
        \fill[black] (a) circle (0.1);
        \fill[black] (ga) circle (0.1);
        \fill[black] (aplus4) circle (0.1);
    \end{tikzpicture}

    \caption{$a$ and $g(a)$ satisfy $s=(0,+1)$. }
\end{subfigure}

\caption{An illustration of \Cref{lemma: region of fixed point}. The light blue region is given by $\bigcup_{i:s_i\neq 0}\calP_i(a+4s,s_i)$, which contains $\Fix(g)$ (the unique fixed point of $g$).}
\label{figure: lemma 2}
\end{figure}

\begin{proof}
Let $c=a+4s$ and $x^*=\Fix(g)$ be the unique fixed point.	
	
	First, we show that  
	$\norm{x^*-a}> {8}/{\gamma}$. Otherwise, we have 
	$$\norm{g(a)-a}\leq \norm{g(a)-g(x^*)}+\norm{x^*-a}\leq (1-\gamma+1)\cdot \norm{x^*-a}\leq \frac{16}{\gamma},$$ which contradicts the assumption that $\norm{g(a)-a}> {16}/{\gamma}$.
	Now it suffices to show that
	  $g(x)\ne x$ for any point $x\in [0,n]^k$ that satisfies both 
	$$x\notin \bigcup_{i,s_i\neq 0}\calP_i(c,s_i)\quad\text{and}\quad\norm{x-a}>\frac{8}{\gamma}.$$ Let $j$ be a dominating coordinate between $x$ and $c$, i.e., a $j\in [k]$ such that $|x_j-c_j|=\norm{x-c}$. We divide the proof into two parts.\medskip 
	
	\noindent \textbf{Part 1: $s_j=0$.} Thus $g(a)_j=a_j$ and $c_j=a_j$. Assume without loss of generality that $x_j\geq c_j$;~the case when $x_j\leq c_j$ is symmetric. On the one hand, we have $x_j-c_j=\norm{x-c}$, which gives	\begin{equation}
		x_j=c_j+\norm{x-c}. \tag{$\star$}
	\end{equation}
	 On the other hand, using that $g$ is a $(1-\gamma)$-contraction and 
	   $g(a)_j=a_j$, we have  
$$g(x)_j-a_j=g(x)_j-g(a)_j\leq (1-\gamma)\cdot \norm{x-a}.$$ Equivalently,  $g(x)_j\leq a_j+\norm{x-a}-\gamma\norm{x-a}$. Combining with $\norm{x-a}> {8}/{\gamma}$, this implies 
	 \begin{equation}
	 	g(x)_j<a_j+\norm{x-a}-8.\tag{$\diamond$}
	 \end{equation}
Putting $(\star)$ and $(\diamond)$ together and the facts that $\norm{c-a}=4$ and $c_j=a_j$, we have  
	\begin{align*}
	   x_j-g(x)_j & > \norm{x-c}-\norm{x-a}+8 \geq -\norm{c-a}+8  > 0,
	\end{align*}
	which implies that $g(x)_j\neq x_j$ and $g(x)\neq x$.\medskip
	 	
	\noindent\textbf{Part 2: $s_j\neq 0$.} Assume without loss of generality that $s_j=+1$; the case $s_j=-1$ is symmetric. 
	
	Since we are considering points not in $\smash{\bigcup_{i:s_i\neq 0}\calP_i(c,s_i)}$, it must be the case that $x\in \calP_j(c,-s_j)$ and thus, $x_j\leq c_j$. 
	Since $\smash{\norm{x-a}> {8}/{\gamma}}$, we have $ x_j\leq a_j$; otherwise $ a_j\leq x_j\leq c_j$ and $\norm{x-c}\leq c_j-x_j\leq c_j-a_j\leq 4$ and thus,
$$\norm{x-a}\le \norm{x-c}+\norm{a-c}\le 8.$$
	
	Given the $(1-\gamma)$-contraction of $g$, we have $g(a)_j-g(x)_j\leq (1-\gamma)\cdot\norm{x-a}$, which implies 
	\begin{equation}\label{eq: hahahaha}
		 g(x)_j\geq  g(a)_j-\norm{x-a}+\gamma\norm{x-a}> g(a)_j-\norm{x-a}+8.
	\end{equation}
	Next, we show an upper bound on $\norm{x-a}$. 
	Recall that $x\in\calP_j(c,-s_j)$. Consider $y=x-4s$. We have $y\in\calP_j(a,-s_j)$. So $$\norm{x-a}\leq\norm{x-y}+\norm{y-a}=4+ (a_j-y_j)=4+ (a_j-x_j+4 )=8+ (a_j-x_j).$$

	Given this and plugging the upper bound in \Cref{eq: hahahaha}, we will get $$ g(x)_j> g(a)_j+x_j-a_j .$$ Recall that $s_j=+1$ implies that $g(a)_j>a_j$. So we have $ g(x)_j> x_j$ and $g(x)\neq x$.	
This finishes the proof of \Cref{lemma: region of fixed point}.
\end{proof}

\begin{lemma}\label{lemma: around}
Let $b\in\mathbb{R}^k$ and $s\in\set{\pm 1,0}^k$ such that $s\neq 0^k$.
Then every $x\in \bigcup_{i:s_i\neq 0}\calP_i(b+2s,s_i)$ must have $\emph{\Around}(x)\subseteq \bigcup_{i:s_i\neq 0}\calP_i(b,s_i)$.
\end{lemma}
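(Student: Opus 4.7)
The plan is to set $M := \|x - (b+2s)\|_\infty$ and translate the pyramid inequality at the ``inner'' point $b+2s$ into a pyramid inequality at $b$, losing at most $1$ per coordinate when we pass from $x$ to a nearby $y$.

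First I would unpack the hypothesis $x \in \calP_i(b+2s, s_i)$ for some $i$ with $s_i \neq 0$. By definition this says $s_i(x_i - (b+2s)_i) = M$, which rewrites as $s_i(x_i - b_i) = M + 2$ (using $(b+2s)_i = b_i + 2s_i$ and $s_i \in \{\pm 1\}$). For every other coordinate $\ell$, the trivial bound $|x_\ell - (b+2s)_\ell| \le M$ splits into two cases: it gives $|x_\ell - b_\ell| \le M$ when $s_\ell = 0$ (since then $(b+2s)_\ell = b_\ell$), and $s_\ell(x_\ell - b_\ell) \in [2-M,\,2+M]$ when $s_\ell \neq 0$.

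Next I would fix an arbitrary $y \in \Around(x)$ and propagate these three coordinate-wise bounds using the $1$-slack $\|y - x\|_\infty \le 1$. This yields: (a) $s_i(y_i - b_i) \ge M + 1$, so in particular $\|y - b\|_\infty \ge M + 1$; (b) $|y_\ell - b_\ell| \le M + 1$ whenever $s_\ell = 0$; and (c) $s_\ell(y_\ell - b_\ell) \ge 1 - M$ whenever $s_\ell \neq 0$.

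Finally, I would pick any dominating coordinate $j$ of $y - b$, so $|y_j - b_j| = \|y - b\|_\infty \ge M + 1$, and conclude by a two-case analysis. If $s_j = 0$, bound~(b) forces $|y_j - b_j| = M + 1$; but then (a) certifies $i$ as a dominating coordinate too, and the sign is already fixed by $s_i(y_i - b_i) \ge M+1 > 0$, so $y \in \calP_i(b, s_i)$. If $s_j \neq 0$, then $s_j(y_j - b_j) \le 0$ would combine with (c) to give $|y_j - b_j| \le M - 1 < M + 1$, a contradiction; hence $\sgn(y_j - b_j) = s_j$ and $y \in \calP_j(b, s_j)$. Either way $y$ lies in the claimed union. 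There is no real obstacle here — the argument is pure bookkeeping of $\ell_\infty$-balls; the only point requiring care is the fall-back to coordinate $i$ in the $s_j = 0$ sub-case, which is why the hypothesis needs a nonzero sign in the first place.
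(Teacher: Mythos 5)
Your proof is correct, and it runs on the same coordinate-wise bookkeeping as the paper's, but the two proofs select the target pyramid in a different order. The paper first picks $j$ to maximize $s_i(y_i-b_i)$ over the restricted index set $\{i : s_i\neq 0\}$ (so $s_j\neq 0$ is built in), and then verifies that this $j$ dominates $y-b$ by a three-way case split. You instead pick $j$ to be a dominating coordinate of $y-b$ outright, and only then case-split on whether $s_j = 0$: when $s_j\neq 0$ you check the sign via your bound (c), and when $s_j = 0$ you fall back to the original witness coordinate $i$, using the tightness $|y_j-b_j| = M+1$ forced by (b) together with (a). The payoff of your ordering is that it is a bit more mechanical (no clever restricted argmax, no separate positivity check $y_{i^*}\geq b_{i^*}+1$ up front), at the mild cost of the fall-back case; the paper's ordering avoids that fall-back entirely. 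Both are sound and essentially equivalent in content.
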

\begin{figure}
\begin{subfigure}[b]{0.45\textwidth}\centering

    \begin{tikzpicture}[scale=0.5]
        \coordinate (A) at (0,0);
        \coordinate (B) at (10,0);
        \coordinate (C) at (10,10);
        \coordinate (D) at (0,10);
        \coordinate (E) at (0,8);
        \coordinate (F) at (8,0);
        \coordinate (G) at (1,0);
        \coordinate (H) at (10,9);
        \coordinate (I) at (0,9);
        \coordinate (J) at (9,0);
        \coordinate (K) at (0,10);
        \coordinate (L) at (10,0);
        \coordinate (a) at (4.5,3.5);
        \coordinate (b) at (5,4);
        \coordinate (bplus2) at (5.5,4.5);
        \coordinate (xstar) at (3.5,6.7);
        \coordinate (xa) at (3.5-0.25,6.7-0.25);
        \coordinate (xb) at (3.5-0.25,6.7+0.25);
        \coordinate (xc) at (3.5+0.25,6.7+0.25);
        \coordinate (xd) at (3.5+0.25,6.7-0.25);

        \fill[cyan, opacity=0.5] (K) -- (D) -- (C) -- (B) -- (L) -- cycle;
        \draw[thick] (A) -- (B);
        \draw[thick] (B) -- (C);
        \draw[thick] (C) -- (D);
        \draw[thick] (D) -- (A);
        
        \draw[thick] (xa) -- (xb);
        \draw[thick] (xb) -- (xc);
        \draw[thick] (xc) -- (xd);
        \draw[thick] (xd) -- (xa);

        \draw[thick, dotted] (G) -- (H);
        \draw[thick, dotted] (E) -- (F);
        \draw[thick, dotted] (I) -- (J);

        \node[below] at (a) {$a$};
        \fill[black] (a) circle (0.1);
        
		\node at ([xshift=0.3cm, yshift=-0.3cm] b) {$b$};
        \fill[black] (b) circle (0.1);
        
        \node[right] at (bplus2) {$b+2s$};
        \fill[black] (bplus2) circle (0.1);
        
        \node[right] at (xstar) {$x$};
        \fill[black] (xstar) circle (0.1);
        
    \end{tikzpicture}

    \caption{The case $s=(+1,+1)$.}
\end{subfigure}\hfill
\begin{subfigure}[b]{0.45\textwidth}\centering%

    \begin{tikzpicture}[scale=0.5]
        \coordinate (A) at (0,0);
        \coordinate (B) at (10,0);
        \coordinate (C) at (10,10);
        \coordinate (D) at (0,10);
        \coordinate (E) at (0,8);
        \coordinate (F) at (8,0);
        \coordinate (G) at (1,0);
        \coordinate (H) at (10,9);
        \coordinate (I) at (0,9);
        \coordinate (J) at (10,10);
        \coordinate (K) at (0,8.5);
        \coordinate (L) at (10,9.5);
        \coordinate (a) at (4.5,3.5);
        \coordinate (b) at (4.5,4);
        \coordinate (bplus2) at (4.5,4.5);
        \coordinate (x) at (2.5,6.7);
        \coordinate (xa) at (2.5-0.25,6.7-0.25);
        \coordinate (xb) at (2.5-0.25,6.7+0.25);
        \coordinate (xc) at (2.5+0.25,6.7+0.25);
        \coordinate (xd) at (2.5+0.25,6.7-0.25);

        \fill[cyan, opacity=0.5] (D) -- (I) -- (bplus2) -- (J) -- (C) -- cycle;
        \draw[thick] (A) -- (B);
        \draw[thick] (B) -- (C);
        \draw[thick] (C) -- (D);
        \draw[thick] (D) -- (A);
        
        \draw[thick] (xa) -- (xb);
        \draw[thick] (xb) -- (xc);
        \draw[thick] (xc) -- (xd);
        \draw[thick] (xd) -- (xa);

        \draw[thick, dotted] (G) -- (H);
        \draw[thick, dotted] (E) -- (F);
        \draw[thick, dotted] (b) -- (K);
        \draw[thick, dotted] (b) -- (L);

        \node[below] at (a) {$a$};
        \fill[black] (a) circle (0.1);
        \node[right] at (b) {$b$};
        \fill[black] (b) circle (0.1);
        \node[right] at (bplus2) {$b+2s$};
        \fill[black] (bplus2) circle (0.1);
        \node[right] at (x) {$x$};
        \fill[black] (x) circle (0.1);
    \end{tikzpicture}

    \caption{The case $s=(0,+1)$. }
\end{subfigure}

\caption{An illustration of \Cref{lemma: around}. The light blue region is given by $\bigcup_{i:s_i\neq 0}\calP_i(b+2s,s_i)$, which contains $x$. $\Around(x)$ is the small square around $x$, which is contained in $\bigcup_{i:s_i\neq 0}\calP_i(b,s_i)$.}
\label{figure: lemma 3}
\end{figure}

\begin{proof}
Let $i^*$ be such that $s_{i^*}\neq 0$ and $x\in\calP_{i^*}(b+2s,s_{i^*})$. 
Assume without loss of generality that $s_{i^*}=+1$.
We have  $x_{i^*}-(b_{i^*}+2 ) \geq |x_i-(b_i+2s_i)|$ for every $i\in [k]$. 

Fix an arbitrary $y\in \Around(x)$. 
We must have $y_{i^*}-b_{i^*}\ge 0$, which follows from 
$$0\le x_{i^*}-(b_{i^*}+2) \le y_{i^*}+1-(b_{i^*}+2)$$
and thus, $y_{i^*}\ge b_{i^*}+1$. 
Let $$j\in\argmax_{i\in[k]}\big\{s_i(y_i-b_i)\mid s_i\neq 0\big\}.$$
Since $s_{i^*}=+1$, we have $$s_j(y_j-b_j)\geq  y_{i^*}-b_{i^*}\ge 0,$$ thus $s_j(y_j-b_j)=|y_j-b_j|$.
Our goal is to show $y\in\calP_j(b,s_j)$ and it suffices for us to show $|y_j-b_j|\geq |y_i-b_i|$ for all $i\in[k]$.

Let's consider first an arbitrary $i\in [k]$ with $s_i=0$. Recall that $ x_{i^*}-(b_{i^*}+2 ) \geq |x_i- b_i |$. In particular, this implies $|x_{i^*}-b_{i^*}|-2\geq |x_i-b_i|$. Putting everything together, we have
$$|y_j-b_j|\geq |y_{i^*}-b_{i^*}|\geq |x_{i^*}-b_{i^*}|-1\geq |x_i-b_i|+1\geq |y_i-b_i|.$$

Finally let's consider an $i\in [k]$ such that $s_i\neq 0$. If $s_i(y_i-b_i)>0$, by the definition of how we picked $j$, we have $|y_j-b_j| \geq |y_i-b_i|$. If $s_i(y_i-b_i)<0$, then we have
	\begin{align*}
|y_i-b_i| & =|y_i-(b_i+2s_i)|-2 \leq |x_i-(b_i+2s_i)|-1\leq  x_{i^*}- b_{i^*}-3 \le y_{i^*}-b_{i^*}-2	
	  < |y_j-b_j|.	\end{align*}
	This finishes the proof of the lemma.
\end{proof}

\begin{lemma}\label{lemma: eliminate}
Let $a\in\cube$ and $s\in\set{\pm 1,0}^k$ such that $s\neq 0^k$. Then for every $j\in[k]$, there exists $\phi\in\set{\pm 1}$ such that $$\calP_j(a,\phi)\cap \left(\bigcup_{i:s_i\neq 0}\calP_i(a+2s,s_i)\right)=\emptyset.$$
\end{lemma}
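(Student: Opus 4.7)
The plan is to pick, for each $j \in [k]$, an explicit sign $\phi \in \set{\pm 1}$ as a function of $s_j$, and then argue by contradiction. Specifically, I would set $\phi = -s_j$ when $s_j \neq 0$, and $\phi = +1$ (arbitrarily) when $s_j = 0$. Suppose towards a contradiction that some $y \in \cube$ lies in $\calP_j(a, \phi) \cap \calP_i(a+2s, s_i)$ for some $i$ with $s_i \neq 0$. Write $D := \norm{y - a}$; the membership $y \in \calP_j(a, \phi)$ yields $\phi(y_j - a_j) = D \geq 0$, while $y \in \calP_i(a+2s, s_i)$ yields $s_i(y_i - a_i - 2 s_i) = \norm{y - (a+2s)}$ and this quantity dominates $|y_\ell - a_\ell - 2 s_\ell|$ for every $\ell \in [k]$.

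The crux is to lower-bound the $j$-th coordinate of $y - (a+2s)$ in each case. If $s_j = 0$, then $|y_j - a_j - 2 s_j| = |y_j - a_j| = \phi(y_j - a_j) = D$. If $s_j \neq 0$, then $\phi = -s_j$ forces $s_j(y_j - a_j) = -D \leq 0$, so $s_j(y_j - a_j - 2s_j) = -D - 2$ and therefore $|y_j - a_j - 2 s_j| = D + 2$. Either way, $|y_j - a_j - 2 s_j| \geq D$, so the pyramid dominance inequality of $\calP_i(a+2s, s_i)$ gives $s_i(y_i - a_i - 2 s_i) \geq D$.

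To conclude, I would add $2 = 2 s_i^2$ to both sides of the last inequality to obtain $s_i(y_i - a_i) \geq D + 2 > 0$, which pins down the sign of $y_i - a_i$ to equal $s_i$ and gives $|y_i - a_i| = s_i(y_i - a_i) \geq D + 2$. But $|y_i - a_i| \leq \norm{y - a} = D$ by definition of the $\ell_\infty$-norm, so we reach the desired contradiction and conclude $\calP_j(a, \phi) \cap \bigcup_{i: s_i \neq 0} \calP_i(a+2s, s_i) = \emptyset$.

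I do not foresee a real obstacle: once the rule $\phi = -s_j$ (or anything when $s_j = 0$) is identified, both cases collapse to the same uniform bound $|y_j - a_j - 2s_j| \geq D$ and the rest is a short $\ell_\infty$ calculation. The only spot that needs small care is the sign bookkeeping when $s_j = 0$, which is trivialized by the observation that $\phi(y_j - a_j) \geq 0$ is automatic from the defining equation of $\calP_j(a, \phi)$, so $|y_j - a_j|$ can be replaced by $\phi(y_j - a_j)$ without a further case split.
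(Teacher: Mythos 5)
Your proof is correct and takes essentially the same approach as the paper: fix $\phi=-s_j$ when $s_j\neq 0$ (arbitrary otherwise), and use the two pyramid equalities at $a$ and at $a+2s$ to derive $|y_i-a_i|\geq \norm{y-a}+2$, contradicting the definition of the $\ell_\infty$-norm. The one organizational difference is that the paper handles $s_j\neq 0$ by simply asserting disjointness of $\bigcup_i\calP_i(a,-s_i)$ and $\bigcup_i\calP_i(a+2s,s_i)$ without calculation and gives an explicit argument only for $s_j=0$, whereas your unified bound $|y_j-a_j-2s_j|\geq\norm{y-a}$ covers both cases at once and in effect supplies the missing calculation for the paper's first case.
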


\begin{figure}
\begin{subfigure}[b]{0.45\textwidth}\centering

    \begin{tikzpicture}[scale=0.5]
        \coordinate (A) at (0,0);
        \coordinate (B) at (10,0);
        \coordinate (C) at (10,10);
        \coordinate (D) at (0,10);
        \coordinate (E) at (0,8);
        \coordinate (F) at (8,0);
        \coordinate (G) at (1,0);
        \coordinate (H) at (10,9);
        \coordinate (I) at (0,9);
        \coordinate (J) at (9,0);
        \coordinate (a) at (4.5,3.5);
        \coordinate (aplus2) at (5,4);

        \fill[cyan, opacity=0.5] (I) -- (D) -- (C) -- (B) -- (J) -- cycle;
        \draw[thick] (A) -- (B);
        \draw[thick] (B) -- (C);
        \draw[thick] (C) -- (D);
        \draw[thick] (D) -- (A);

        \draw[thick, dotted] (G) -- (H);
        \draw[thick, dotted] (E) -- (F);

        \node[below] at (a) {$a$};
        \node[right] at (aplus2) {$a+2s$};
        \fill[black] (a) circle (0.1);
        \fill[black] (aplus2) circle (0.1);
    \end{tikzpicture}

    \caption{The case $s=(+1,+1)$. The light blue region is given by $\bigcup_{i:s_i\neq 0}\calP_i(a+2s,s_i)$. Clearly it does not intersect with $\calP_1(a,-1)$ or $\calP_2(a,-1)$.}
\end{subfigure}\hfill
\begin{subfigure}[b]{0.45\textwidth}\centering%

    \begin{tikzpicture}[scale=0.5]
        \coordinate (A) at (0,0);
        \coordinate (B) at (10,0);
        \coordinate (C) at (10,10);
        \coordinate (D) at (0,10);
        \coordinate (E) at (0,8);
        \coordinate (F) at (8,0);
        \coordinate (G) at (1,0);
        \coordinate (H) at (10,9);
        \coordinate (I) at (0,8.5);
        \coordinate (J) at (10,9.5);
        \coordinate (a) at (4.5,3.5);
        \coordinate (aplus2) at (4.5,4);

        \fill[cyan, opacity=0.5] (D) -- (I) -- (aplus2) -- (J) -- (C) -- cycle;
        \draw[thick] (A) -- (B);
        \draw[thick] (B) -- (C);
        \draw[thick] (C) -- (D);
        \draw[thick] (D) -- (A);

        \draw[thick, dotted] (G) -- (H);
        \draw[thick, dotted] (E) -- (F);

        \node[below] at (a) {$a$};
        \node[right] at (aplus2) {$a+2s$};
        \fill[black] (a) circle (0.1);
        \fill[black] (aplus2) circle (0.1);
    \end{tikzpicture}

    \caption{The case $s=(0,+1)$. The light blue region is given by $\bigcup_{i:s_i\neq 0}\calP_i(a+2s,s_i)$. Clearly it does not intersect with $\calP_2(a,-1)$, $\calP_1(a,-1)$, or $\calP_1(a,1)$.}
\end{subfigure}

\caption{An illustration of \Cref{lemma: eliminate}. When $s_i\neq 0$, we can prove that $\bigcup_{i:s_i\neq 0}\calP_i(a+2s,s_i)$ does not intersect with $\calP_i(a,-s_i)$; see left figure above. When $s_i=0$, we can prove $\bigcup_{i:s_i\neq 0}\calP_i(a+2s,s_i)$ does not intersect with $\calP_i(a,-1)$ or $\calP_i(a,1)$; see right figure above.}
\label{figure: lemma 4}
\end{figure}

\begin{proof}
First we note that $$\left(\bigcup_{i:s_i\neq 0}\calP_i(a,-s_i)\right)\cap \left(\bigcup_{i:s_i\neq 0}\calP_i(a+2s,s_i)\right)=\emptyset.$$ This implies that $$\calP_j(a,\phi)\cap \left(\bigcup_{i:s_i\neq 0}\calP_i(a+2s,s_i)\right)=\emptyset$$ for all $j$ with $s_j\neq 0$ by setting $\phi=-s_j$. 

	Now consider a $j$ with $s_j=0$. Under this case, we show in fact that $$\calP_j(a,\phi)\cap \left(\bigcup_{i:s_i\neq 0}\calP_i(a+2s,s_i)\right)=\emptyset$$ for both $\phi\in\set{\pm 1}.$ 
	Consider any point $x\in \bigcup_{i:s_i\neq 0}\calP_i(a+2s,s_i)$ and we show that $x\notin \calP_j(a,-1)$ and  $x\notin \calP_j(a,+1)$. 
	Let $b=a+2s$. As $x\in \bigcup_{i:s_i\neq 0}\calP_i(b,s_i)$, there exists $i^*$ with $s_{i^*}\neq 0$ such that $$s_{i^*}(x_{i^*}-b_{i^*})=\norm{x-b}\geq |x_j-b_j|.$$ Note also that $|x_{i^*}-a_{i^*}|=s_{i^*}(x_{i^*}-b_{i^*}+2s_{i^*})=s_{i^*}(x_{i^*}-b_{i^*})+2$ and $b_j=a_j$, so we have $$|x_{i^*}-a_{i^*}|=\norm{x-b}+2>|x_j-b_j|=|x_j-a_j|.$$ Thus  $x\notin \calP_j(a,-1)$ and $x\notin \calP_j(a,+1)$.		
	This finishes the proof of the lemma.
\end{proof}

\section{The Algorithm}\label{sec:mainalg}

\begin{algorithm}[!t]
	\caption{Query Algorithm for $\Contraction(\eps,\gamma,k)$} 
	\label{alg: main}
\begin{algorithmic}[1]	
 \State
	Let $\Cand^0 \gets \EVEN(n,k)$.  
 \For {$t=1,2,\ldots$}
	\State	Find and \textbf{query} an $a^t\in[0:n]^k$ such that $a^t$ is a balanced point of $\Cand^{t-1}$.
		 \State \textbf{if} {$\norm{g(a^t)-a^t}\leq  {16}/{\gamma}$} \textbf{then}
		 \textbf{return} $a^t$ as a $(16/\gamma)$-fixed point of $g$\label{line: return}
		 
		\State Let $\smash{s\in\{\pm 1,0\}^k}$ be such that $\smash{s_i=\sgn(g(a^t)_i-a^t_i)}$ for all $i\in[k]$,  $b^t\gets a^t+2s$, and
		 \begin{equation}\label{Candt}\Cand^t\gets \Cand^{t-1}\cap \left(\bigcup_{i:s_i\neq 0}\calP_i(b^t,s_i)\right).\end{equation}
 \EndFor
\end{algorithmic}
\end{algorithm}

We prove \Cref{theorem: main} in this section.
Our algorithm for $\Contraction(\eps,\gamma,k)$ is described in \Cref{alg: main}.
Given oracle access to a $(1-\gamma)$-contraction map 
  $g:[0,n]^k\rightarrow [0,n]^k$, we show that it can find a 
  $( {16}/{\gamma})$-fixed point of $g$  
within $O\left(k\log\left(n\right)\right)$ many queries. This is sufficient given \Cref{basiclemma}. 

The analysis of \Cref{alg: main} uses the following theorem which 
  we prove in the next section. 
In particular, it guarantees the existence of 
  the point $a^t$ to be queried in round $t$ that satisfies that for any $s\in\set{\pm 1}^k$, we have $\left(\bigcup_{i\in[k]}\calP_i(a^t,s_i)\right)$ intersects with at least half of $\Cand^{t-1}$.
We will call a point $q^*$ with the property stated in \Cref{lemma: existence of balanced point}
 below a \emph{balanced} point for $T$.

\begin{restatable}{theorem}{existenceofbalancedpoint}\label{lemma: existence of balanced point}
For any $T\subseteq \emph{\EVEN}(n,k)$, there exist a point $q^*\in[0:n]^k$ such that $$\left|T\cap \left(\bigcup_{i\in[k]}\calP_i(q^*,s_i)\right)\right|\geq \frac{1}{2}\cdot |T|,\quad\text{for all $s\in\set{\pm 1}^k$}.$$
\end{restatable}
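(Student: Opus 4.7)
The plan follows the sketch in Section~\ref{section: Sketch of the Query Algorithm}: relax the discrete problem on $\OE(n,k)$ to a continuous problem on $[0,n]^k$, apply Brouwer's theorem to a sequence of such relaxations, pass to a limit via Bolzano-Weierstrass, and round. The relaxation is designed so that fixed points are ``soft-balanced'' and the limit $p^*$ inherits the balanced condition in the hard sense on every $(i,\phi)$.

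For each $t\geq 1$ and $y\in T$, introduce softmax weights
\begin{equation*}
w^t_{y,i,\phi}(q) \;:=\; \frac{\exp\bigl(t\,\phi\,(y_i - q_i)\bigr)}{\sum_{j\in[k],\,\psi\in\set{\pm 1}}\exp\bigl(t\,\psi\,(y_j - q_j)\bigr)},
\end{equation*}
which are continuous in $q$, form a probability distribution over the $2k$ directions for each $y$, and concentrate (as $t\to\infty$) on the dominant set $\set{(i,\phi):\phi(y_i-q_i)=\norm{y-q}}$. Set $W^t_{i,\phi}(q):=\sum_{y\in T}w^t_{y,i,\phi}(q)$, so $\sum_{i,\phi}W^t_{i,\phi}(q)=|T|$. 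Define the continuous self-map
\begin{equation*}
f^t(q)_i \;:=\; \min\bigl(n,\,\max(0,\,q_i + W^t_{i,+}(q) - W^t_{i,-}(q))\bigr)
\end{equation*}
and let $p^t$ be a Brouwer fixed point. Analyzing the fixed-point equation in the interior and on each boundary face (e.g.\ at $q_i=0$ every $y\in T$ has $y_i\geq q_i$, so $W^t_{i,+}(q)\geq W^t_{i,-}(q)$), one obtains in all cases $W^t_{i,+}(p^t)=W^t_{i,-}(p^t)$ for every $i$, which together with the normalization yields $W^t_{i,\phi}(p^t)=|T|/(2k)$ for every $(i,\phi)$.

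By Bolzano-Weierstrass extract a subsequence $p^{t_m}\to p^*\in[0,n]^k$, and refine further so that each $w^{t_m}_{y,i,\phi}(p^{t_m})$ converges to some $\ell^y_{i,\phi}\geq 0$ with $\sum_{i,\phi}\ell^y_{i,\phi}=1$. The key observation is: if $(i,\phi)$ is not a dominant direction of $y\ne p^*$ at $p^*$, then some $(j,\psi)$ satisfies $\psi(y_j-p^*_j)>\phi(y_i-p^*_i)$ by a positive margin that persists along the sequence, forcing $w^{t_m}_{y,i,\phi}(p^{t_m})\to 0$ exponentially in $t_m$. Hence $\ell^y_{i,\phi}>0$ implies $y\in\calP_i(p^*,\phi)$, and
\begin{equation*}
|\calP_i(p^*,\phi)\cap T| \;\geq\; \sum_{y\in T}\ell^y_{i,\phi} \;=\; \lim_{m\to\infty} W^{t_m}_{i,\phi}(p^{t_m}) \;=\; \frac{|T|}{2k}
\end{equation*}
for every $(i,\phi)$; that is, $p^*$ is balanced in the hard sense on every coordinate.

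The final step is to round $p^*$ to a point $q^*\in\OE(n,k)$ while retaining the balance on at least one coordinate $i^*$. The natural candidates are the nearest all-even and all-odd $\OE$-points $q_e$ and $q_o$ of $p^*$, each within $\ell_\infty$-distance $1$. I expect this rounding step to be the main technical obstacle: a $y\in T$ lying near a pyramid boundary at $p^*$ can migrate to a neighboring pyramid at $q_e$ or $q_o$, possibly dropping some counts below $|T|/(2k)$. However, because $T\subseteq \OE(n,k)$, the coordinates of $y-q^*$ share a common parity for every fixed $y\in T$ and $q^*\in\set{q_e,q_o}$, which restricts which ties in $\norm{y-q^*}$ can arise and hence how memberships can shift; combined with the slack that $p^*$ satisfies the balanced condition on all $2k$ sign-coordinate pairs (not just one), a careful case analysis over the two rounding choices and the possible switching patterns should yield a $q^*\in\OE(n,k)$ and $i^*\in[k]$ as required.
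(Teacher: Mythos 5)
Your softmax relaxation is a genuinely different route from the paper's: the paper thickens each point of $T$ into an $\ell_2$-ball of radius $1/t$ and defines the Brouwer map via \emph{volumes} of pyramid-ball intersections, whereas you soft-assign each $y\in T$ a probability over the $2k$ directions via a temperature-$t$ softmax. Both schemes are continuous relaxations that concentrate on the dominant-direction set as $t\to\infty$, and both then invoke Brouwer, pass to a subsequential limit $p^*$, and round. Your limiting argument (ties resolve one-sidedly, non-dominant directions decay exponentially, hence $\ell^y_{i,\phi}>0\Rightarrow y\in\calP_i(p^*,\phi)$) is sound and is essentially the student-friendly analogue of the paper's Lemma~\ref{lemma: balanced 111}.

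There is, however, a small error and one genuine gap. The error: from $W^t_{i,+}(p^t)=W^t_{i,-}(p^t)$ for every $i$ and $\sum_{i,\phi}W^t_{i,\phi}=|T|$ you conclude $W^t_{i,\phi}(p^t)=|T|/(2k)$ \emph{for every} $(i,\phi)$. That does not follow once $k\ge 2$: the normalization only forces $\sum_i W^t_{i,+}=|T|/2$, so the individual summands can be wildly unequal. You must instead apply pigeonhole, as the paper does, to get one index $i_t$ with $W^t_{i_t,\pm}(p^t)\ge |T|/(2k)$, then refine the subsequence so that a single $i^*$ works for all $t_m$. This invalidates your later appeal to ``the slack that $p^*$ satisfies the balanced condition on all $2k$ sign-coordinate pairs,'' which you were hoping to lean on in the rounding step.

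The genuine gap is exactly the rounding step you flag as ``the main technical obstacle'': you do not prove it, and the plan you sketch (compare both the nearest all-even and nearest all-odd $\OE$-rounding, and do a case analysis over switching patterns) is more complicated than necessary and does not obviously close. The missing idea is a clean quantitative parity argument. Round only once: set $q^*_{i^*}$ to the nearest integer to $p^*_{i^*}$ (error $\le 1/2$), then force every other coordinate to the same parity, incurring error at most $5/4$. For $x\in\OE(n,k)$ in $\calP_{i^*}(p^*,+1)$, membership is witnessed by $x_{i^*}-x_j\ge p^*_{i^*}-p^*_j$ and $x_{i^*}+x_j\ge p^*_{i^*}+p^*_j$. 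The rounding perturbs $q^*_{i^*}\mp q^*_j$ from $p^*_{i^*}\mp p^*_j$ by at most $1/2+5/4=7/4<2$; but $x_{i^*}\mp x_j$ and $q^*_{i^*}\mp q^*_j$ are both \emph{even integers} (all coordinates of $x$ and $q^*$ share a parity), so a real inequality holding up to an additive $7/4$ upgrades to the exact integer inequality. This gives $\calP_{i^*}(p^*,\phi)\cap T\subseteq\calP_{i^*}(q^*,\phi)\cap T$ for both $\phi$ and finishes the proof, with no second rounding candidate and no case analysis over ties.
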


At a high level, \Cref{alg: main} maintains a subset of grid points $\EVEN(n,k)$ as candidate solutions, which is denoted by $\Cand^t$ after round $t$. We show the following invariants:
	
	\begin{lemma}\label{lemma: for alg}
		For every round $t\geq 1$, either the point $a^t$ queried is a $(16/\gamma)$-fixed point of $g$ (and the algorithm terminates), or we have both 
		 $(\emph{\Around}(\emph{\Fix}(g))\cap \emph{\EVEN}(n,k))\subseteq\emph{\Cand}^t$ and
		 \begin{equation}\big|\emph{\Cand}^t\big|\leq \frac{1}{2}\cdot \big|\emph{\Cand}^{t-1}\big|.\label{easypart}\end{equation}
	\end{lemma}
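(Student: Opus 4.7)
The plan is to prove both invariants by induction on $t$, chaining together the three geometric lemmas from Section~\ref{section: geometry}. The base case $t=0$ is trivial: $\Cand^0 = \OE(n,k)$, and by definition $\Around(\Fix(g)) \subseteq \OE(n,k)$. Note also that a balanced point $a^t$ is guaranteed to exist at every round by \Cref{lemma: existence of balanced point} applied to $T = \Cand^{t-1}$, so line~3 of the algorithm is well-defined.

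For the inductive step, suppose the invariant $\Around(\Fix(g)) \subseteq \Cand^{t-1}$ holds. If $\norm{g(a^t) - a^t} \leq 16/\gamma$ then line~\ref{line: return} returns $a^t$ and there is nothing to prove. Otherwise $\norm{g(a^t) - a^t} > 16/\gamma$, which in particular forces $s \neq 0^k$, so the geometric lemmas are applicable. I would then chain the first two as follows: \Cref{lemma: region of fixed point} applied to $a = a^t$ gives $\Fix(g) \in \bigcup_{i: s_i \neq 0} \calP_i(a^t + 4s, s_i)$, and then \Cref{lemma: around} applied to $b = b^t = a^t + 2s$---whose shifted anchor $b + 2s$ equals $a^t + 4s$---yields $\Around(\Fix(g)) \subseteq \bigcup_{i: s_i \neq 0} \calP_i(b^t, s_i)$. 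Combined with the inductive hypothesis and the defining identity~(\ref{Candt}), this gives $\Around(\Fix(g)) \subseteq \Cand^t$.

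For the size bound~(\ref{easypart}), the balanced property of $a^t$ on the coordinate $j$ of (\ref{pointtoquery}) gives $|\calP_j(a^t, \phi) \cap \Cand^{t-1}| \geq |\Cand^{t-1}|/(2k)$ for both $\phi \in \{\pm 1\}$. I would then apply \Cref{lemma: eliminate} with $a = a^t$ and the same $s$ to obtain some $\phi^* \in \{\pm 1\}$ satisfying $\calP_j(a^t, \phi^*) \cap \bigcup_{i: s_i \neq 0} \calP_i(b^t, s_i) = \emptyset$. Since~(\ref{Candt}) forces $\Cand^t \subseteq \bigcup_{i: s_i \neq 0} \calP_i(b^t, s_i)$, this implies $\Cand^t \cap \calP_j(a^t, \phi^*) = \emptyset$, so $\Cand^t \subseteq \Cand^{t-1} \setminus \calP_j(a^t, \phi^*)$ and the factor-$(1-1/(2k))$ shrinkage follows immediately. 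The only delicate point in the whole argument is aligning the three shifts correctly---$a^t + 4s$ in \Cref{lemma: region of fixed point}, the identity $b^t + 2s = a^t + 4s$ used to invoke \Cref{lemma: around}, and $b^t = a^t + 2s$ in \Cref{lemma: eliminate}---but once these correspondences are matched the argument is essentially bookkeeping, since the geometric heavy lifting has already been carried out in Section~\ref{section: geometry}.
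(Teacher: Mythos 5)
Your proof is correct and follows essentially the same route as the paper's: chaining \Cref{lemma: region of fixed point} (at $a^t+4s$), \Cref{lemma: around} (via $b^t+2s=a^t+4s$), and \Cref{lemma: eliminate} (at $a^t$, $b^t$), together with (\ref{pointtoquery}) and (\ref{Candt}). The only differences are cosmetic --- you prove the containment invariant before the size bound, while the paper does it in the opposite order, and you explicitly note the (correct, necessary) fact that $\norm{g(a^t)-a^t}>16/\gamma$ forces $s\neq 0^k$ so the geometric lemmas apply.
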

	
	\begin{proof}[Proof of \Cref{lemma: for alg}]
We start with the proof of (\ref{easypart}).
Suppose that $a^t$ is a balanced point of $\Cand^{t-1}$. 
Then by \Cref{lemma: eliminate}, for any coordinate $j\in[k]$, there must exist a sign $\phi_j\in\set{\pm 1}$ such that $$\calP_j(a^t,\phi_j)\cap \left(\bigcup_{i:s_i\neq 0}\calP_i(b^t,s_i)\right)=\emptyset.$$ Since $\Cand^t\coloneqq \Cand^{t-1}\cap \left(\bigcup_{i:s_i\neq 0}\calP_i(b^t,s_i)\right)$, we know that for every coordinate $j\in[k]$, there exists $\phi_j\in\set{\pm 1}$ such that $\Cand^t\cap \calP_j(a^t,\phi_j)=\emptyset$. The inequality (\ref{easypart}) follows directly from the definition of the balanced point.
	
Next we prove by induction that 
  $ ({\Around}( {\Fix}(g))\cap \EVEN(n,k))\subseteq  {\Cand}^t$ for every $t$ before the round that the algorithm terminates.
The basis is trivial given that $\Cand^0$ is set to be $\EVEN(n,k)$.	
For round $t\ge 1$, we assume that $\norm{g(a^t)-a^t}> {16}/{\gamma}$; otherwise a solution is found and the algorithm terminates.
	
	Let $b^t=a^t+2s$ and $c^t=b^t+2s$. By \Cref{lemma: region of fixed point}, we know that $$\Fix(g)\in \left(\bigcup_{i:s_i\neq 0}\calP_i(c^t,s_i)\right).$$ 
	Since $b^t$ is defined as $c^t-2s$, by \Cref{lemma: around}, we know that $$\Around(\Fix(g))\subseteq \bigcup_{i:s_i\neq 0}\calP_i(b^t,s_i).$$ 
We finish the proof by using the inductive hypothesis  $(\Around(\Fix(g))\cap \EVEN(n,k))\subseteq \Cand^{t-1}$ and (\ref{Candt}).	
\end{proof}
		
Recall that ${\Around}( {\Fix}(g))\cap \EVEN(n,k)$ is non-empty. Given \Cref{lemma: for alg} and that $|\Cand^0|\le n^k$, within at most 
$$
 O\left( \log(n^k)\right)=O\left(k\log\left(\frac{1}{\eps\gamma}\right)\right)
$$
many rounds, one of the points $a^t$ queried by the algorithm must be
  a $(16/\gamma)$-fixed point of $g$. 
This finishes the proof of \Cref{theorem: main}.

\section{Existence of Balanced Point}

We prove \Cref{lemma: existence of balanced point} in this section, which we restate below:
\existenceofbalancedpoint*
\begin{proof} 
For each positive integer $t\geq 4$ we let 
$$
S^t:= \cup_{x\in T} B(x,1/t)\subset [-1/4,n+1/4]^k,
$$
where $B(x,1/t)$ denotes the $\ell_2$-ball of radius $1/t$ centered at $x$. We write $\vol(S^t)$ to denote 
the volume of $S^t$ and $\vol(S^t\cap \calP)$ to denote the volume of the intersection of $S^t$ and some pyramid $\calP$.

We apply Brouwer's fixed point to prove the existence of 
  a balanced (real) point for the balls:

\begin{lemma}\label{lem:fixedpoint}
	For every integer $t\geq 4$, there exist $p^*\in [-1/4,n+1/4]^k$ such that $\emph{\vol}(\calP_{i}(p^*,+1)\cap S^t)= \emph{\vol}(\calP_{i}(p^*,-1)\cap S^t)$ holds for all coordinates $i\in[k]$.
\end{lemma}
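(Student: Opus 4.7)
My plan is to find a \emph{fully balanced} point $p^* \in [-1/4,n+1/4]^k$, i.e., one at which $D_i(p^*) := V_i^+(p^*) - V_i^-(p^*) = 0$ for every $i \in [k]$, where $V_i^\phi(p) := \vol(\calP_i(p,\phi) \cap S^t)$ and I interpret the pyramids $\calP_i(p,\phi)$ as cones in $\mathbb{R}^k$ (a mild extension of the paper's definition, needed because $S^t$ protrudes slightly past $[0,n]^k$). Once such a $p^*$ is in hand, the lemma follows by pigeonhole: for any $p$ the $2k$ pyramids tile $\mathbb{R}^k$ up to a measure-zero boundary, so
$$\sum_{i \in [k]} \bigl(V_i^+(p) + V_i^-(p)\bigr) = \vol(S^t),$$
hence some $i^*$ must satisfy $V_{i^*}^+(p^*) + V_{i^*}^-(p^*) \geq \vol(S^t)/k$, and combined with $D_{i^*}(p^*) = 0$ this yields the claimed $V_{i^*}^+(p^*) = V_{i^*}^-(p^*) \geq \vol(S^t)/(2k)$.

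To produce $p^*$ I would apply Brouwer to a clipped rebalancing map. Two preliminary facts are needed. First, each $V_i^\phi$ is continuous on the closed box $B := [-1/4,n+1/4]^k$, since $S^t$ is bounded with finite volume and the symmetric difference $\calP_i(p,\phi) \triangle \calP_i(p',\phi)$ has Lebesgue measure tending to $0$ as $p' \to p$. Second, $D_i$ is correctly signed on the two faces of $B$ perpendicular to axis $i$: on $\{p_i = -1/4\}$, any $y \in \calP_i(p,-1) \cap S^t$ satisfies $y_i \leq p_i = -1/4$, and since $S^t \subseteq B$ this forces $y_i = -1/4$, a measure-zero condition; hence $V_i^-(p) = 0$ and $D_i(p) \geq 0$ on that face, and by symmetry $D_i(p) \leq 0$ on $\{p_i = n+1/4\}$.

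With these ingredients the Brouwer step runs cleanly. Pick $\epsilon > 0$ with $\epsilon \cdot \vol(S^t) < 1/2$ and let $F : B \to B$ be the continuous map whose $i$-th coordinate is $F_i(p) := \Pi\bigl(p_i + \epsilon D_i(p)\bigr)$, where $\Pi$ clips to $[-1/4,n+1/4]$. Brouwer yields a fixed point $p^*$; I then case-split coordinate by coordinate. If the clip is inactive at $i$, then $F_i(p^*) = p_i^* + \epsilon D_i(p^*) = p_i^*$ forces $D_i(p^*) = 0$. If the lower clip is active, then $p_i^* = -1/4$ and $\epsilon D_i(p^*) < 0$, which contradicts the face-sign $D_i(p^*) \geq 0$ above unless $D_i(p^*) = 0$; the upper face is symmetric. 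Thus $D_i(p^*) = 0$ for every $i$, as required.

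The main obstacle I anticipate is making the tiling identity rigorous once the pyramids are enlarged from subsets of $[0,n]^k$ to cones in $\mathbb{R}^k$: I need $\bigcup_{i,\phi} \calP_i(p,\phi) = \mathbb{R}^k$ modulo the union of hyperplanes $\{y : |y_i - p_i| = |y_j - p_j|\}$, and I need to justify that $S^t$ intersects those hyperplanes in measure zero. Both are standard measure-theoretic checks, but they are the one spot where the exotic geometry of $\ell_\infty$-pyramids and the ``cube vs.\ box'' distinction interact, so they deserve careful attention.
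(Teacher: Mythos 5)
Your proof is correct and takes essentially the same approach as the paper: apply Brouwer's fixed point theorem to a continuous rebalancing map on $[-1/4,n+1/4]^k$ to produce a point $p^*$ with $V_i^+(p^*)=V_i^-(p^*)$ for every $i$, then pigeonhole over the $2k$ pyramids. The one cosmetic difference is how you keep the map inside the box: the paper normalizes the volume difference by $(n+0.5)^{k-1}$, so that
\[
0\le \frac{V_i^+(p)}{(n+0.5)^{k-1}}\le n+\tfrac14-p_i,\qquad
0\le \frac{V_i^-(p)}{(n+0.5)^{k-1}}\le p_i+\tfrac14,
\]
which makes the map self-mapping without any clipping, whereas you use a small step size $\epsilon$ together with a clip $\Pi$ and then rule out active clips at the fixed point by the boundary-sign argument $D_i\ge 0$ on $\{p_i=-1/4\}$ (and $\le 0$ on the other face). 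Both work; the paper's normalization avoids the case split, but your clipped version is equally valid. One thing you do more carefully than the paper: you explicitly extend the pyramids $\calP_i(p,\phi)$ from subsets of $[0,n]^k$ to cones in $\mathbb{R}^k$, which is indeed needed for the tiling identity $\sum_i\big(V_i^+(p)+V_i^-(p)\big)=\vol(S^t)$ to hold, since $S^t$ protrudes into $[-1/4,n+1/4]^k\setminus[0,n]^k$; the paper uses this implicitly but does not spell it out. The measure-zero and continuity checks you flag at the end are indeed routine (the boundaries $\{|y_i-p_i|=|y_j-p_j|\}$ are finite unions of hyperplanes, and volume of the pyramid-slab intersection is Lipschitz in $p$), so your argument goes through.
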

\begin{proof}
We define a continuous map $f:\left[-1/4 ,n+1/4\right]^k\mapsto \left[-1/4,n+1/4\right]^k$ and apply
Brouwer's fixed point theorem on $f$ to find a fixed point $p^*$ of $f$, and show that
$p^*$ satisfies the property above.  

We define $f$ as follows: For every $p\in \left[-1/4,n+1/4\right]^k$ and $i\in [k]$, let
$$
f_i(p) := p_i + \frac{\vol\big(\calP_i(p,+1)\cap S^t\big)-\vol\big(\calP_i(p,-1)\cap S^t\big)}{(n+0.5)^{k-1}}.
$$
It is clear that $f$ is continuous. To see that it is from $\left[-1/4,n+1/4\right]^k$ to itself,
  we note that 
$$
0\le \frac{\vol\big(\calP_i(p,+1)\cap S^t\big)}{(n+0.5)^{k-1}}\le n+\frac{1}{4}-p_i\quad\text{and}\quad
0\le \frac{\vol\big(\calP_i(p,-1)\cap S^t\big)}{(n+0.5)^{k-1}}\le p_i+\frac{1}{4},
$$
where the upper bounds above simply follow from the fact that $$
\vol\big(\calP_i(p,+1)\big)\leq \bigg(n+\frac{1}{4}-p_i\bigg)\cdot (n+0.5)^{k-1}\quad\text{and}\quad
\vol\big(\calP_i(p,-1)\big)\le \bigg(p_i+\frac{1}{4}\bigg)\cdot(n+0.5)^{k-1}.
$$

As a result, one can apply Brouwer's fixed point theorem on $f$ to conclude that
  there exists a point $p^*\in \left[-1/4,n+1/4\right]^k$ such that $f(p^*)=p^*$, which implies that 
$$
\vol\big(\calP_i(p^*,+1)\cap S^t\big)=\vol\big(\calP_i(p^*,-1)\cap S^t\big)
$$
for all $i\in [k]$.
This finishes the proof of the lemma.
\end{proof}

By \Cref{lem:fixedpoint}, we have that for every $t$,
  there exists $p^t\in \left[-1/4,n+1/4\right]^k$ such that 
$$
\vol\big(\calP_{i}(p^t,+1)\cap S^t\big)=\vol\big(\calP_{i}(p^t,-1)\cap S^t\big)
$$
for all $i\in[k]$.
Given that $\left[-1/4,n+1/4\right]^k$ is compact, $\{p^{t}\}_{t\geq 1}$ has an infinite subsequence that
  converges~to a point $p^*\in \left[-1/4,n+1/4\right]^k$.
We refer to the subsequence as $\{p^{t_\ell}\}_{\ell\ge 1}$.

In \Cref{lemma: balanced 111}, we show that $p^*$ is already a desired balanced point but it may be off-grid.
In \Cref{lemma: balanced 222}, we show how to round $p^*$ to $q^*\in [0:n]^k$ while making sure that 
$$
\calP_{i}(p^*,+1)\cap T\subseteq \calP_{i^*}(q^*,+1)\cap T\quad\text{and}\quad \calP_{i^*}(p^*,-1)\cap T\subseteq 
\calP_{i^*}(q^*,-1)\cap T.
$$
Our goal then follows by combining these two lemmas.

\begin{lemma}\label{lemma: balanced 111}
We have 
$$
\left|\left(\bigcup_{i\in[k]}\calP_i(p^*,s_i)\right)\cap T\right|\geq \frac{1}{2}\cdot |T|,\quad\text{for all $s\in\set{\pm 1}^k$}.
$$
\end{lemma}
\begin{proof}
We write $A$ to denote the following (potentially empty) set of positive real numbers defined using $p^*$:
  $a\in (0,1)$ is in $A$ if there are $i\ne j\in [k]$ such that either 
\begin{enumerate}
\item $p_i^*+p_j^*$ is an integer plus $a$; or
\item $p_i^*+p_j^*$ is an integer minus $a$; or
\item $p_i^*-p_j^*$ is an integer plus $a$; or
\item $p_i^*-p_j^*$ is an integer minus $a$.
\end{enumerate}
Note that it is clear that $A$ is a finite set.

Consider the easier case when $A$ is empty, i.e. $p_i^*+p_j^*$ and $p_i^*-p_j^*$ are
  integers for all $i\ne j\in [k]$. 
Let $\ell$ be a sufficiently large integer such that 
  $1/t_\ell\le 0.1$ and $\|p^{t_\ell}-p^*\|_\infty\le 0.1$. Fixing any $s\in\set{\pm 1}^k$, we first show the following claim:
  \begin{claim}\label{claim: keykeykey}
  	For every point $x\in T$, we have 
  	$$
  	\left(\bigcup_{i\in[k]} \calP_{i}(p^{t_\ell},s_i)\right)\cap B(x,1/t_\ell)\ne \emptyset\ \ \Longrightarrow\ \ 
x\in \bigcup_{i\in[k]} \calP_{i}(p^*,s_i).
  	$$
  \end{claim}
  \begin{proof}
  We show a slightly stronger statement in the proof: For any $i\in[k]$, we have
  $$
  	\calP_{i}(p^{t_\ell},s_i)\cap B(x,1/t_\ell)\ne \emptyset\ \ \Longrightarrow\ \ 
x\in \calP_{i}(p^*,s_i).
  	$$

  Let's prove the contrapositive so take any $x\in T$ such that $x\notin \calP_{i}(p^*,s_i)$. Assume without loss of generality that $s_i=+1$. We know that there exists a $j\ne i$ such that either
$$
x_{i}-p^*_{i}<x_{j}-p^*_j\quad\text{or}\quad 
x_{i}-p^*_{i}<p^*_j-x_j.
$$
For the first case we have $x_{i}-x_j<p^*_{i}-p^*_j$. Since both sides are integers
  we have
\begin{equation}\label{eq:hehe1}
x_{i}-x_j\le p^*_{i}-p^*_j-1
\end{equation}
so intuitively $x$ is far from $\calP_{i}(p^*,+1)$.
From this we can conclude that $B(x,1/t_\ell)\cap \calP_{i}(p^{t_\ell},+1)$ is empty.
To see this is the case, for any  $y\in B(x,1/t_\ell)$,
  it follows from $\|x-y\|_\infty\le \|x-y\|_2\le 0.1$ and 
  $\|p^*-p^{t_\ell}\|_\infty\le 0.1$ and (\ref{eq:hehe1})
  that
$$
y_{i}-y_j\le p^{t_\ell}_{i}-p^{t_\ell}_j-(1-0.4)<p^{t_\ell}_{i}-p^{t_\ell}_j
$$
and thus, $y\notin \calP_{i}(p^{t_\ell},+1)$.

The other case where $x_{i}-p^*_{i}<p^*_j-x_j$ is similar.
  \end{proof}
  
  Notice that $B(x,1/t_{\ell})$ are all disjoint from each other, so for every $x\in T$, we have 
  $$\frac{\vol\left(\left(\bigcup_{i\in[k]} \calP_{i}(p^{t_\ell},s_i)\right)\cap B(x,1/t_\ell)\right)}{\vol(S^{t_\ell})}\leq \frac{\vol(B(x,1/t_\ell))}{\vol(S^{t_\ell})}= \frac{1}{|T|}.$$
  
  Thus, Claim~\ref{claim: keykeykey} implies that 
  
\begin{equation}\label{eq: aaabbbccc}
\frac{\left|\left(\bigcup_{i\in[k]}\calP_i(p^*,s_i)\right)\cap T\right|}{|T|}\ge \frac{\vol\left(\left(\bigcup_{i\in[k]} \calP_{i}(p^{t_\ell},s_i)\right)\cap S^{t_{\ell}}\right)}{\vol(S^{t_\ell})}.
\end{equation}

It only remains to note that $$\vol\left(\left(\bigcup_{i\in[k]} \calP_{i}(p^{t_\ell},s_i)\right)\cap S^{t_{\ell}}\right)=\sum_{i\in[k]}\vol\left(\calP_{i}(p^{t_\ell},s_i)\cap S^{t_{\ell}}\right)$$ and $$\vol(\calP_{i}(p^{t_\ell},+1)\cap S^{t_\ell})=\vol(\calP_{i}(p^{t_\ell},-1)\cap S^{t_\ell})$$ for all $i\in[k]$. 
The first equation holds because the intersection of any two different pyramids $\calP_{i}(p^{t_\ell},s_i)$ lies
on a hyperplane and thus has 0 volume, and the second equation follows from the choice of  $p^{t_\ell}$.
So \Cref{eq: aaabbbccc} implies  \Cref{lemma: balanced 111} since 

$$
\frac{\left|\left(\bigcup_{i\in[k]}\calP_i(p^*,s_i)\right)\cap T\right|}{|T|}\ge \frac{\vol\left(\left(\bigcup_{i\in[k]} \calP_{i}(p^{t_\ell},s_i)\right)\cap S^{t_{\ell}}\right)}{\vol(S^{t_\ell})}=\sum_{i\in[k]}\frac{\vol(\calP_{i}(p^{t_\ell},s_i)\cap S^{t_\ell})}{\vol(S^{t_\ell})} = \frac{1}{2}.
$$

Now we consider the general case when $A$ is not empty and let $\alpha>0$ be the smallest value in $A$; note that $\alpha\le 1/2$.
In this case we let $\ell$ be a sufficiently large integer such that 
  $1/t_\ell\le 0.1\alpha$ and $\|p^{t_\ell}-p^*\|_\infty\le 0.1\alpha$.
Similarly it suffices to show that for every
  point $x\in T$ and $i\in[k]$, the following property holds 
$$
\calP_{i}(p^{t_\ell},s_i)\cap B(x,1/t_\ell)\ne \emptyset\ \ \Longrightarrow\ \ 
x\in \calP_{i}(p^*,s_i).
$$
Let's prove the contrapositive so take any $x\in T$ and $i\in[k]$ such that $x\notin \calP_{i}(p^*,s_i)$. Assume without loss of generality that $s_i=+1$.
There exists a $j\ne i$ such that either
$$
x_{i}-p^*_{i}<x_{j}-p^*_j\quad\text{or}\quad 
x_{i}-p^*_{i}<p^*_j-x_j.
$$
For the first case we have $x_{i}-x_j<p^*_{i}-p^*_j$. Since 
  $p^*_{i}-p^*_j$ is either an integer or an integer $\pm$ something that is between $\alpha$ and $1-\alpha$, we have
$$
x_{i}-x_j\le p^*_{i}-p^*_j-\alpha.
$$
The rest of the proof is similar.
\end{proof}

Given $p^*$, we round it to an integer point $q^*\in [0:n]^k$ as follows: 
Let $q_{i}^*\in [0,n]$ be an integer such that $|p^*_{i}-q^*_{i}|\le 1/2$ for every $i\in[k]$ (in the case that $q^*_{i}$ is not unique, we break ties by choosing $q^*_{i}\in\set{p^*_{i}-1/2,p^*_{i}+1/2}$ such that $q^*_{i}$ is odd). It is clear that $q^*\in[0:n]^k$.
  
\begin{lemma}\label{lemma: balanced 222}
The point $q^*$ satisfies $q^*\in [0:n]^k$ and 
$$
\calP_{i}(p^*,+1)\cap T\subseteq \calP_{i}(q^*,+1)\cap T\quad\text{and}\quad \calP_{i}(p^*,-1)\cap T\subseteq 
\calP_{i}(q^*,-1)\cap T \quad\text{for all } i\in[k].
$$
\end{lemma}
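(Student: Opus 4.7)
The plan is to verify $q^*\in\OE(n,k)$ directly from the construction: $q^*_{i^*}$ is assumed even and lies in $\{0,\ldots,n\}$, while each $q^*_j$ for $j\ne i^*$ is also chosen to be an even integer in $\{0,\ldots,n\}$ (trivially $0$ in case~1, an even element of $\{n-1,n\}$ in case~2, and by the explicit rule in case~3). The symmetric case where $q^*_{i^*}$ is odd is handled by picking every $q^*_j$ odd.

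For the $+1$ inclusion, the strategy will be to unpack the definition of $\calP_{i^*}(\cdot,+1)$ into a pair of linear inequalities per coordinate, transport them from $p^*$ to $q^*$ via the bounds $|p^*_{i^*}-q^*_{i^*}|\le 1/2$ and $|p^*_j-q^*_j|\le 5/4$ recorded just before the lemma, and then absorb the resulting slack by a parity argument. Concretely, fix $y\in T\cap\calP_{i^*}(p^*,+1)$; by definition, for every $j\in[k]$,
$$
y_{i^*}-y_j \;\ge\; p^*_{i^*}-p^*_j
\qquad\text{and}\qquad
y_{i^*}+y_j \;\ge\; p^*_{i^*}+p^*_j,
$$
so substituting the bounds on $|p^*-q^*|$ yields
$$
y_{i^*}-y_j \;\ge\; q^*_{i^*}-q^*_j - 7/4
\qquad\text{and}\qquad
y_{i^*}+y_j \;\ge\; q^*_{i^*}+q^*_j - 7/4.
$$

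The crucial observation is that both sides of each inequality are \emph{even integers}: since $y\in\OE(n,k)$ all coordinates of $y$ share parity, so $y_{i^*}\pm y_j$ is even, and since all coordinates of $q^*$ are even (all odd in the symmetric case), $q^*_{i^*}\pm q^*_j$ is also even. Any even integer that is at least $-7/4$ is in fact at least $0$, so the inequalities sharpen to $y_{i^*}-y_j \ge q^*_{i^*}-q^*_j$ and $y_{i^*}+y_j \ge q^*_{i^*}+q^*_j$, which together rearrange to $y_{i^*}-q^*_{i^*} \ge |y_j-q^*_j|$ for every $j\in[k]$. Hence $y\in\calP_{i^*}(q^*,+1)$. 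The $-1$ inclusion will follow by the same calculation with all inequalities reversed (an even integer that is at most $7/4$ must be at most $0$).

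The only delicate step is this parity-based rounding of $-7/4$ up to $0$: it crucially uses that $y_{i^*}-y_j$ and $y_{i^*}+y_j$ are \emph{simultaneously} even, which is exactly why $T$ is required to sit inside $\OE(n,k)$ rather than the full integer grid, and it needs the cumulative slack $1/2+5/4 = 7/4$ to strictly underflow the gap $2$ between consecutive even integers. Everything else is bookkeeping.
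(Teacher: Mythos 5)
Your proof is correct and follows essentially the same route as the paper's: rewrite membership in $\calP_{i^*}(p^*,\pm 1)$ as the two linear inequalities on $y_{i^*}\pm y_j$, transport them to $q^*$ at the cost of $1/2+5/4=7/4$, and then use that both $y_{i^*}\pm y_j$ and $q^*_{i^*}\pm q^*_j$ are even (because $y,q^*\in\OE(n,k)$) to round the slack away since $7/4<2$. The paper states this slightly more tersely (it only writes out the $x_{i^*}-x_j$ inequality and says the other is similar), but the argument, including the parity trick that motivates the $\OE$ grid, is the same.
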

\begin{proof}
Fix an arbitrary $i\in[k]$. We show $\calP_{i}(p^*,+1)\cap T\subseteq \calP_{i}(q^*,+1)\cap T$ below, and the other part $\calP_{i}(p^*,-1)\cap T\subseteq 
\calP_{i}(q^*,-1)\cap T$ can be handled in the same way. 

Let $x\in \EVEN(n,k)$ be a point in $\calP_{i}(p^*,+1)\cap T$. So for every $j\ne i$ we have 
$$
x_{i}-p^*_{i}\ge x_{j}-p^*_j\quad\text{and}\quad 
x_{i}-p^*_{i}\ge p^*_j-x_j,
$$
or equivalently
$$
x_{i}- x_{j}\ge p^*_{i} -p^*_j\quad\text{and}\quad 
x_{i}+x_j\ge p^*_{i}+ p^*_j. 
$$
It suffices to show that 
$$
x_{i}-x_j\ge q^*_{i}-q^*_j\quad\text{and}\quad 
x_{i}+x_j\ge q^*_i+q^*_{j}.
$$
Note that 
$$
p_{i}^*\geq q^*_{i}-1/2\quad\text{and}\quad 
p_{j}^*\leq q_{j}^*+1/2.
$$
So we have
$$
x_{i}- x_{j}\ge p^*_{i} -p^*_j\geq (q^*_{i}-1/2)-(q^*_j+1/2)=(q^*_{i}-q^*_j)-1.
$$
Given the rounding rule for obtaining $q^*$ and the fact that $x\in \EVEN(n,k)$, we will argue that it must be the case where $x_{i}- x_{j}\ge q^*_{i}-q^*_j.$ Consider two cases below.

If 
$
p_{i}^*> q^*_{i}-1/2$ or $p_{j}^*< q_{j}^*+1/2.
$, then we know that 
$
x_{i}- x_{j}>q^*_{i}-q^*_j-1
$. Since both $x_{i}- x_{j}$ and $ q^*_{i}-q^*_j $ are integer numbers, this implies
$
x_{i}- x_{j}\ge q^*_{i}-q^*_j.
$

If 
$
p_{i}^*= q^*_{i}-1/2$ and $p_{j}^*= q_{j}^*+1/2,
$ then we know that both $q_{i}^*$ and $q_{j}^*$ are odd numbers, which means the parity of $|q_{i}^*-q_{j}^*|$ is even. Since the parity of $|x_{i}- x_{j}|$ is also even, $x_{i}- x_{j}\geq q^*_{i}-q^*_j-1$ implies  $x_{i}- x_{j}\geq q^*_{i}-q^*_j$.

The other part $x_{i}+x_j\ge q^*_i+q^*_{j}$ can be proved similarly.
\end{proof}
This finishes the proof of \Cref{lemma: existence of balanced point}.
\end{proof}

\noindent\textit{Remark on \Cref{lemma: existence of balanced point}.} We note that the (possibly off-grid point) $p^*$ in the proof already satisfies the desired property and our algorithm can proceed by querying $p^*$.
However, $p^*$ as defined here is the limit of fixed points found in an infinite sequence of maps.
In contrast, \Cref{lemma: balanced 222}  shows that, after rounding, a grid balanced point always exists, 
  which can be found by brute-force enumeration.

\section{Impossibility of Strong Approximation under Non-expansion}

We consider functions $f$ on the plane with bounded domain and range, e.g. the unit square, that are non-expansive under the $\ell_{\infty}$ metric. 
We will show the following impossibility result.

\impossiblestrongappxnonexpanding*

In the proof it will be more convenient to use as the domain a square that is tilted by
$45^\circ$. We call a rectangle whose sides are at $45^\circ$ and $-45^\circ$, a {\em diamond}. Let $D$ be the diamond whose vertices are the midpoints of the sides of the unit square. Any function $g$ over $D$ can be extended to a function $g'$ over the unit square, by defining for every point $p \in [0,1]^2$ the value of the function as  $g'(p) = g(\pi(p))$, where $\pi(p)$ is the projection of $p$ onto $D$, namely, the point obtained by translating $p$ along lines perpendicular to the sides of $D$ until it intersects the boundary of the diamond. Clearly, for any two points $p, q \in [0,1]^2$, $\norm{\pi(p) - \pi(q)} \leq \norm{p-q}$, hence if the function $g$ over $D$ is non-expansive, then so is the function $g'$ over $[0,1]^2$. Furthermore, the fixed points of $g'$ are exactly the fixed points of $g$.

We will prove the statement of the theorem for the domain $D$. 
The claim then follows for the unit square. To see this, restrict attention to the non-expansive functions $g'$ on the unit square that are extensions of functions $g$ on the diamond $D$. If we have an algorithm for the unit square, then we can use the algorithm also for the diamond $D$: when the algorithm queries a point $p \in [0,1]^2$ then we query instead its projection $\pi(q) \in D$. If the algorithm outputs at the end a point that is close to a fixed point of $g'$, then its projection on $D$ is a valid output for $g$.

For any $\delta \in (0,1/2)$ and any point $s$ on the SW or NE side of the diamond $D$ that is at least at Euclidean distance $\delta$ from the vertices of $D$, we will define a non-expansive function $f_{\delta,s}$ with unique fixed point $s$.
The function is defined as follows. Draw the line $l_0$ through $s$ at $45^\circ$ and let $t$ be the point of intersection with the opposite side of $D$. Let $l_1$ and $l_2$ be the two lines parallel to $l_0$ that are left and right of $l_0$ respectively at Euclidean distance $\delta$, and let $D_0$ be the strip of $D$ that is strictly between the lines $l_1$ and $l_2$. Let $D' = D \setminus D_0$.
Every point $p \in D'$ is mapped by $f_{\delta,s}$ to the point that is at Euclidean distance $\delta$ towards the line $l_0$; i.e., if $p=(p_1,p_2)$ is left and above $l_0$ then
$f_{\delta,s}(p) = (p_1 + \delta/\sqrt{2}, p_2 -\delta/\sqrt{2})$, and if $p$ is right and below $l_0$ then $f_{\delta,s}(p) = (p_1 - \delta/\sqrt{2}, p_2 +\delta/\sqrt{2})$.

For a point $p$ in $D_0$ we define $f_{\delta,s}(p)$ as follows. Let $p'$ be the projection of $p$ onto the line $l_0$. Then  $f_{\delta,s}(p)$ is the point on $l_0$
that is at Euclidean distance $(\delta-|pp'|) \cdot |p's|$ from $p'$ 
in the direction of $s$,
where $|pp'|, |p's|$ are the (Euclidean) lengths of the segments $pp'$ and $p's$.
Thus for example, if $p=s$ then $p'=s$ and $f_{\delta,s}(s)=s$. If $p=t$ then $p'=t$ and
$(\delta-|pp'|) \cdot |p's| = \delta /\sqrt{2}$, 
so $t$ moves along $l_0$ distance $\delta /\sqrt{2}$ towards $s$.
Note that if $p$ is on line $l_1$ or $l_2$ 
(i.e. on the boundary sides between $D_0$ and $D'$), then $(\delta-|pp'|) \cdot |p's|=0$,
since $|pp'| = \delta$, thus $p$ is mapped to $p'$ whether we treat $p$ as a member of $D'$ or as a member of $D_0$. It follows that $f_{\delta,s}(p)$ is continuous over $D$.

As we noted above, $s$ is a fixed point of $f_{\delta,s}(p)$. 
We claim that it is the only fixed point.
Clearly, any fixed point $p$ must be in $D_0$ and must lie on the line $l_0$,
thus $p=p'$. It must satisfy also $(\delta-|pp'|) \cdot |p's| =0$, hence $|p's|=0$,
and thus, $p=s$.

We will show now that $f_{\delta,s}$ is a non-expansive function,
i.e. that $\norm{ f_{\delta,s}(p) - f_{\delta,s}(q) } \leq \norm{ p-q }$ for all $p, q \in D$.
We show first that it suffices to check pairs $p,q$ that are {\em diagonal} to each other,
i.e. such that the line connecting them is at $45^\circ$ or $-45^\circ$.
Note that such points have the property that the $L_{\infty}$ distance is tight in
both coordinates, $\norm{ p-q } = |p_1 -q_1| = |p_2 - q_2|$. 

\begin{lemma} \label{lem:diagonal2}
If a function $f$ on the diamond $D$ satisfies  $\norm{ f(p) - f(q) } \leq \norm{ p-q }$ 
for all diagonal pairs of points $p, q$, then $f$ is non-expansive.
\end{lemma}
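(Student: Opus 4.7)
The plan is to reduce any pair $p,q \in D$ to two diagonal sub-pairs joined at an intermediate point $r \in D$, and then apply the $\ell_\infty$ triangle inequality together with the hypothesis. Concretely, I would let $r$ be the unique intersection of the $45^\circ$ line through $p$ with the $-45^\circ$ line through $q$; solving the two linear equations gives
$$
r \;=\; \Bigl(\tfrac{p_1 + q_1 + q_2 - p_2}{2},\ \tfrac{q_1 + q_2 + p_2 - p_1}{2}\Bigr).
$$

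The first step is to verify that $r \in D$. This is where the choice of the diamond domain pays off: in the rotated coordinates $u = x+y$, $v = x-y$, the set $D$ is an axis-aligned rectangle, and a direct computation gives $u_r = q_1 + q_2 = u_q$ and $v_r = p_1 - p_2 = v_p$. Since $p,q \in D$ already pin the $u$- and $v$-ranges, we get $r \in D$ at no cost. I expect this to be the only nontrivial containment check in the whole proof.

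The second step is to check that $(p,r)$ and $(r,q)$ are both diagonal pairs and that their $\ell_\infty$ lengths add to $\|p-q\|_\infty$. Writing $a = p_1 - q_1$ and $b = p_2 - q_2$, one reads off $p - r = \tfrac{a+b}{2}(1,1)$ and $r - q = \tfrac{a-b}{2}(1,-1)$, so both sub-segments are diagonal with $\|p-r\|_\infty = |a+b|/2$ and $\|r-q\|_\infty = |a-b|/2$. The elementary identity $(|a+b| + |a-b|)/2 = \max(|a|,|b|) = \|p-q\|_\infty$ then gives the key additive relation
$$
\|p - r\|_\infty + \|r - q\|_\infty \;=\; \|p - q\|_\infty.
$$

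With these pieces in place the conclusion is immediate: applying the standard triangle inequality in $\ell_\infty$ and then the hypothesis to the two diagonal pairs,
$$
\|f(p) - f(q)\|_\infty \;\leq\; \|f(p) - f(r)\|_\infty + \|f(r) - f(q)\|_\infty \;\leq\; \|p - r\|_\infty + \|r - q\|_\infty \;=\; \|p - q\|_\infty.
$$
The degenerate cases $a+b=0$ or $a-b=0$ (when $p$ and $q$ are themselves already diagonal) are absorbed into this argument, since one of the two sub-segments has length zero and the other coincides with $(p,q)$. I do not anticipate any genuine obstacle; the argument is essentially a change of coordinates together with an observation that the 45$^\circ$-rotated diamond is a product of intervals, which is precisely the reason the domain was tilted in the first place.
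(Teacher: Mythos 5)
Your proof is correct and takes essentially the same approach as the paper: introduce an intermediate corner of the diamond spanned by $p$ and $q$, observe that both legs are diagonal and their $\ell_\infty$ lengths sum to $\|p-q\|_\infty$, and close with the triangle inequality. The only notable difference is that you make explicit the verification that the intermediate point $r$ lies in $D$ (via the product structure of $D$ in rotated coordinates), a step the paper leaves implicit.
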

\begin{proof}
Let $x,y$ be any two points that are not diagonal. 
Consider the diamond with opposite vertices $x, y$,
i.e. draw the lines through $x, y$ at $45^\circ$ and $-45^\circ$
and consider the rectangle enclosed by them.
Let $z, w$ be the other two vertices of this diamond. 
Suppose without loss of generality that
$\norm{ x -y } = x_1 - y_1 > |x_2 - y_2|$.
Then $x_1 > z_1 > y_1$, and similarly for $w$.
We have $\norm{ x -y } = x_1 - y_1 = ( x_1 -z_1) + (z_1 - y_1)$
$= \norm{ x -z } + \norm{ z -y }$.
Since $f$ is non-expansive on diagonal pairs,
$\norm{ f(x) - f(z) } \leq \norm{ x-z }$ and $\norm{ f(z) - f(y) } \leq \norm{ z-y }$.
Therefore, $\norm{ f(x) - f(y) } \leq \norm{ f(x) - f(z) } + \norm{ f(z) - f(y) }$
$\leq  \norm{ x-z } + \norm{ z-y } = \norm{ x -y }$.
\end{proof}

\noindent {\em Remark.} The lemma can be shown to hold more generally in any dimension. That is, if $f: [0,1]^k \mapsto [0,1]^k$ has the property that
$\norm{ f(p) - f(q) } \leq \norm{ p-q }$ 
for all diagonal pairs of points $p, q$ 
(i.e. such that $ |p_i-q_i| = \norm{ p-q } \text{ for all }i \in [k]$),  then $f$ is non-expansive.

\begin{lemma}
The function $f_{\delta,s}$ is non-expansive.
\end{lemma}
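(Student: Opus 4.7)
The plan is a three-move proof: first, reduce to diagonal pairs via Lemma~\ref{lem:diagonal2}; second, pass to rotated coordinates in which the piecewise formula for $f_{\delta,s}$ collapses to something clean; third, carry out a short finite case analysis.

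For the second move, I will introduce orthonormal coordinates $(u,v)$ on $\mathbb{R}^2$ in which $l_0$ becomes the line $u=0$, the $v$-axis is aligned with $l_0$, and $s$ sits at the origin; so $u$ is signed distance from $l_0$ along the SW side and $v$ is signed position along $l_0$. In these coordinates $D$ is a $(1/\sqrt 2)\times(1/\sqrt 2)$ square, the strip $D_0$ is exactly $\{|u|<\delta\}$, and a direct substitution into the definition yields
\[
f_{\delta,s}(u,v)\;=\;\begin{cases}(u-\sgn(u)\delta,\ v),&|u|\ge\delta,\\[1mm](0,\ v(1-\delta+|u|)),&|u|\le\delta,\end{cases}
\]
with the two branches agreeing on $|u|=\delta$. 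Two elementary facts then drive the rest: the $\ell_\infty$-norm in the original coordinates of a rotated displacement $(a,b)$ equals $(|a|+|b|)/\sqrt 2$, and the diagonal condition of Lemma~\ref{lem:diagonal2} is equivalent to $u_p=u_q$ or $v_p=v_q$. Crucially, every $p\in D$ satisfies the geometric bound $|v|\le 1/\sqrt 2<1$, since the extent of $D$ perpendicular to the SW side is exactly $1/\sqrt 2$.

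For the third move I split by which coordinate the diagonal pair shares. If $u_p=u_q=u$, the map acts uniformly in $u$: when $|u|\ge\delta$, both points are translated by the same vector, and when $|u|<\delta$, the $v$-coordinates are rescaled by the common factor $1-\delta+|u|\in(1-\delta,1)$, strictly contracting the distance. If $v_p=v_q=v$, I further subdivide by the positions of $u_p,u_q$ relative to $[-\delta,\delta]$: both outside on the same side gives a common translation; both outside on opposite sides shrinks the $u$-gap by $2\delta$; both inside the strip gives an image displacement with a single nonzero component of size $|v|\cdot\bigl||u_p|-|u_q|\bigr|$, and non-expansion follows from $\bigl||u_p|-|u_q|\bigr|\le|u_p-u_q|$ together with $|v|\le 1$. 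The remaining possibility, one $u$ inside the strip and one outside, reduces after a short computation to a one-variable inequality of the form $|v|(\delta-|u_{\text{in}}|)\le\delta-u_{\text{in}}$, which again holds from $|v|\le 1$.

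The only step I expect to require actual care is the mixed sub-case just described, since it is the only place where the two branches of $f_{\delta,s}$ have to be reconciled. Fortunately, once the rotated coordinates are in place, that reconciliation is a one-line scalar inequality whose sole nontrivial ingredient is the universal bound $|v|\le 1/\sqrt 2$. That bound is really the entire geometric content of the lemma: the diameter of $D$ along $l_0$ is exactly $1/\sqrt 2<1$, which is precisely the slack needed to absorb the $v$-dependent rescaling inside $D_0$ without producing any expansion.
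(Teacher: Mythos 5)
Your proof is correct, and at the top level it follows the same blueprint as the paper's: reduce to diagonal pairs via Lemma~\ref{lem:diagonal2}, then do a finite case analysis. The genuine improvement is the change to rotated coordinates, which turns the verbal, geometric definition of $f_{\delta,s}$ into the closed-form expression $f_{\delta,s}(u,v)=(u-\sgn(u)\delta,\,v)$ outside the strip and $(0,\,v(1-\delta+|u|))$ inside it, together with the clean identity $\norm{\cdot}_\infty=(|a|+|b|)/\sqrt2$ for a rotated displacement $(a,b)$. With those two facts, each case becomes an explicit scalar inequality, whereas the paper argues each case synthetically in terms of projections $p',q'$ and Euclidean lengths. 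A second difference worth noting: the paper asserts without justification that it suffices to check diagonal pairs lying in the \emph{same} region (this needs a further decomposition at the lines $l_1,l_2$, in the spirit of Lemma~\ref{lem:diagonal2}), whereas your mixed sub-case handles pairs straddling a boundary directly via the inequality $|v|(\delta-|u_{\text{in}}|)\le\delta-u_{\text{in}}$ (and its mirror $\le\delta+u_{\text{in}}$ when the outside point is on the other side), which follows from $|u_{\text{in}}|\ge\pm u_{\text{in}}$ and $|v|\le 1$. So your write-up is slightly more self-contained. The only small presentational caveat: with your choice of origin at $s$, the coordinate $v$ actually ranges over $[0,1/\sqrt2]$ rather than being symmetric about $0$, so the bound you need is just $0\le v\le 1/\sqrt2<1$; the absolute values around $v$ are harmless but unnecessary.
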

\begin{proof}
The function $f_{\delta,s}$ was defined according to which region
of the domain $D$ a point lies in. There are three regions:
the part of $D'$ left of $l_1$, the middle region $D_0$, and the part of $D'$ right of $l_2$. It suffices to check the non-expansiveness for diagonal pairs of points
$p, q$ that lie in the same region.
If $p, q$ are both in the region left of $l_1$, or if they are both right of $l_2$,
then from the definition we have $\norm{ f(p) - f(q) } = \norm{ p-q }$.

So suppose  $p, q$ are both in $D_0$. Assume first that the line $pq$ has angle $45^\circ$, i.e. $pq$ is parallel to the line $l_0$. 
Then $\norm{ p-q } = |pq|/\sqrt{2}$.
Let $p', q'$ be the projections of $p, q$ on $l_0$, and let
$p''= f_{\delta,s}(p)$, $q''=f_{\delta,s}(q)$.
Then $|p'p''| = (\delta-|pp'|) \cdot |p's|$, $|q'q''| = (\delta-|qq'|) \cdot |q's|$.
Since $pq$ is parallel to $l_0$, $(\delta-|pp'|)=(\delta-|qq'|)$ and $|pq| = |p'q'|$.
Assume without loss of generality that $|p's| > |q's|$.
Then $|p''q''| = |p'q'| - (\delta-|pp'|)(|p's|-|q's|) \leq |p'q'| = |pq|$.
Since $\norm{ p-q } = |pq|/\sqrt{2}$ and $\norm{ f(p) - f(q) } = |p''q''|/\sqrt{2}$,
it follows that $\norm{ f(p) - f(q) } \leq \norm{ p-q }$.

Assume now that the line $pq$ has angle $-45^\circ$, 
i.e., $pq$ is perpendicular to $l_0$. 
Again $\norm{ p-q } = |pq|/\sqrt{2}$.
Now $p$ and $q$ have the same projection $p'=q'$ on $l_0$.
Let $p''= f_{\delta,s}(p)$, $q''=f_{\delta,s}(q)$.
We have $|p'p''| = (\delta-|pp'|) \cdot |p's|$, 
and $ |q'q''| = |p'q''| =(\delta-|qp'|) \cdot |p's|$.
Therefore, $|p''q''| = | (|pp'| - |qp'|) | \cdot |p's|$.
If $p, q$ are on the same side of $l_0$ then $| (|pp'| - |qp'|) | = |pq|$.
If $p, q$ are on opposite sides of $l_0$ then $| (|pp'| - |qp'|) | < |pq|$.
In either case, we have $|p''q''| \leq |pq| \cdot |p's| < |pq|$,
since $|p's| \leq |st| = 1/\sqrt{2}$.
Again, since $\norm{ p-q } = |pq|/\sqrt{2}$ and $\norm{ f(p) - f(q) } = |p''q''|/\sqrt{2}$,
it follows that $\norm{ f(p) - f(q) } \leq \norm{ p-q }$.
\end{proof}

We are ready now to prove the theorem.
Intuitively, if the given function is $f_{\delta,s}$ for some $s$ on the
NE or SW side of $D$ and some small $\delta$, then for an algorithm (deterministic
or randomized) to find a point that is within $L_{\infty}$ distance 1/4 of $s$,
it must ask a query within the central region $D_0$ around $s$,
because otherwise it cannot know whether the fixed point $s$ is 
on the NE or the SW side of $D_0$.  

\noindent{\em Proof of \Cref{thm:strong-appx}.}
Recall that binary search is an optimal algorithm for searching for an unknown
item in a sorted array $A$, both among deterministic and randomized algorithms.
If the array has size $N$, then any randomized comparison-based algorithm
requires expected time at least $\log N -1$ to look up
an item in the array whose location is not known.

Suppose there is a (randomized) algorithm $B$ that computes a point
that is within 1/4 of a fixed point of a non-expansive function $f$ over
the domain $D$ within a finite expected number $n$ of queries
(the expectation is over the random choices of the algorithm). 
We will show how to solve faster the array search problem.
Partition the diamond $D$ into $N = 2^{2n}$ strips by
drawing $N-1$ parallel lines at $45^\circ$, spaced at distance $1/(N \sqrt{2})$ from each other, between the NW and SE side of $D$.
Let $S_1, \ldots, S_N$ be the $N$ strips.
Fix a $\delta < 1/ (N 2\sqrt{2})$.
For each $x \in [N]$, let $s_x$ be the point on the SW side of $S_x$ at Euclidean distance $\delta$ from the S vertex of $S_x$, 
and $t_x$ the point on the NE side of $S_x$ at Euclidean
 distance $\delta$ from the N vertex.
Note that $\norm{ s_x - t_x } > 1/2$.
Let ${\calF}$ be the family of non-expansive functions  $\{ f_{\delta,s_x}, f_{\delta,t_x} ~| x \in [N] \}$.

Consider the execution of algorithm $B$ for a function $f \in \calF$.
Note that the central region $D_0$ for the functions $f_{\delta,s_x}$ and $f_{\delta,t_x}$
is contained in the strip $S_x$.
If $B$ queries a point $p$ in another strip $S_j$, the answer $f(p)$ only
conveys the information whether $j < x$ or $j > x$.
If some execution of $B$ returns a point $q$ without ever having queried any point
in $S_x$, then all the answers in the execution are consistent with both
$f_{\delta,s_x}$ and $f_{\delta,t_x}$.
Since $\norm{ s_x - t_x } > 1/2$, either $\norm{q - s_x } > 1/4$ or $\norm{ q - t_x } > 1/4$.
Therefore, a correct algorithm $B$ cannot terminate before querying a point
in the strip $S_x$ that contains the fixed point of the function.

We can map now the algorithm $B$ to an algorithm $B'$ for the problem
of searching for an item in a sorted array $A$ of size $N$.
A choice of an index $x$ in the array $A$ corresponds to a choice
of the strip $S_x$ that contains the fixed point of the function $f \in \calF$,
i.e. choosing one of $f_{\delta,s_x}$, $f_{\delta,t_x}$.
Since $B$ terminates in expected number $n$ of queries, it asks 
within $n$ steps a query within the strip $S_x$ of the fixed point,
hence the expected time of the algorithm $B'$ is at most
$n = \log N/2$, a contradiction.
\qed

\section{Promise Problem versus Total Search Version}

The problem $\Contraction(\eps,\gamma,k)$ is a \emph{promise problem}, where we want to compute an $\eps$-fixed point of a given function $f$ with promise that $f$ is a $(1-\gamma)$-contraction. For a promise problem, one can define its total search version by asking to find a desired solution as in the promise problem, or a short violation certificate indicating that the given function doesn't satisfy the promise. The total search version of $\Contraction(\eps,\gamma,k)$,
denoted $\TContraction(\eps,\gamma,k)$ is naturally defined as the following search problem.

\begin{definition}[Total search version $\TContraction(\eps,\gamma,k)$]
	Given a function $f:[0,1]^k\mapsto [0,1]^k$, find one of the following:
	\begin{itemize}
	\item a point $x\in[0,1]^k$ such that $\norm{f(x)-x}\leq \eps$;
	\item two points $x,y\in[0,1]^k$ such that $\norm{f(x)-f(y)}>(1-\gamma)\norm{x-y}$.
	\end{itemize}
	In the black-box setting, the function $f$ is given by an oracle access. 
\end{definition}

Our theorem in this section shows $\TContraction(\eps,\gamma,k)$ admits the same query bounds as $\Contraction(\eps,\gamma,k)$.

\totalversion*

	\Cref{thm: total search version} follows from \Cref{lem:total} below.
	
	\begin{lemma}\label{lem:total}
Let $\set{q^1,\cdots,q^m}$ be a set of points in $[0,1]^k$ and $\set{a^1,\cdots,a^m}$ be the corresponding answers from the black-box oracle. There is a $(1-\gamma)$-contraction $f$ that is consistent with all the answers if and only if there is no pair $t_1,t_2$ such that $\norm{a^{t_1}-a^{t_2}}>(1-\gamma)\norm{q^{t_1}-q^{t_2}}$.
\end{lemma}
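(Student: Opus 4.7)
The plan is to recognize \Cref{lem:total} as a statement about partial Lipschitz extensions and prove the nontrivial direction by a coordinate-wise McShane--Whitney construction, adapted to keep the range inside $[0,1]^k$. The forward (``only if'') direction is immediate: if $f$ is a $(1-\gamma)$-contraction with $f(q^i)=a^i$ for all $i$, then by the contraction property $\norm{a^{t_1}-a^{t_2}}=\norm{f(q^{t_1})-f(q^{t_2})}\le(1-\gamma)\norm{q^{t_1}-q^{t_2}}$ for every pair $t_1,t_2$, so no violating pair can exist.

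For the reverse (``if'') direction, assume that $\norm{a^{t_1}-a^{t_2}}\le(1-\gamma)\norm{q^{t_1}-q^{t_2}}$ for every pair. The first observation I would make is that since both the domain and the codomain carry the $\ell_\infty$-norm, a vector-valued map $f:[0,1]^k\to[0,1]^k$ is a $(1-\gamma)$-contraction iff each of its coordinate maps $f_j:[0,1]^k\to\mathbb{R}$ satisfies $|f_j(x)-f_j(y)|\le(1-\gamma)\norm{x-y}$. So it suffices to build each coordinate separately. For each $j\in[k]$ I would define the McShane extension
$$\tilde{f}_j(x)\ :=\ \max_{i\in[m]}\Big(a^i_j-(1-\gamma)\norm{x-q^i}\Big),$$
clip it by $f_j(x):=\max\!\big(0,\min(1,\tilde{f}_j(x))\big)$, and finally set $f(x):=(f_1(x),\ldots,f_k(x))$.

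Three properties then have to be checked. (i) Consistency at queried points: $\tilde{f}_j(q^i)=a^i_j$. This holds because the $i$-th term of the max equals $a^i_j$, and by the no-violation hypothesis every other term satisfies $a^l_j-(1-\gamma)\norm{q^i-q^l}\le a^i_j$ via $|a^l_j-a^i_j|\le\norm{a^l-a^i}\le(1-\gamma)\norm{q^i-q^l}$; after clipping, consistency survives because $a^i_j\in[0,1]$. (ii) Lipschitzness: the standard McShane argument gives $|\tilde{f}_j(x)-\tilde{f}_j(y)|\le(1-\gamma)\norm{x-y}$ (take the index $i^*$ achieving the max at $x$ and use it as a lower bound at $y$, then swap roles), and the clipping map $t\mapsto\max(0,\min(1,t))$ is $1$-Lipschitz so the constant is preserved. (iii) Range: clipping forces $f(x)\in[0,1]^k$ by construction.

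The only place where the hypothesis is genuinely used is step (i); steps (ii) and (iii) are mechanical. So the real content, and the only ``obstacle,'' is noticing that the required pointwise inequality for $\tilde{f}_j(q^i)=a^i_j$ is exactly the coordinate-wise shadow of the global no-violation assumption --- once this is spelled out for each $j$, the rest of the construction goes through automatically, and the resulting $f$ is a $(1-\gamma)$-contraction from $[0,1]^k$ to $[0,1]^k$ consistent with every oracle answer.
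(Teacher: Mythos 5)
Your proof is correct and follows essentially the same approach as the paper: a coordinate-wise Lipschitz (McShane--Whitney) extension of the partial data, with consistency following from the no-violation hypothesis. The only cosmetic difference is that you use the lower-envelope formula $\max_i(a^i_j-(1-\gamma)\norm{x-q^i})$ and clip on both sides, whereas the paper uses the dual upper-envelope $\min_t((1-\gamma)\norm{x-q^t}+a^t_j)$ and clips only at $1$ (the lower bound $0$ is automatic there); the two constructions are interchangeable.
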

\begin{proof}
If there is some pair $t_1,t_2$ such that $\norm{a^{t_1}-a^{t_2}}>(1-\gamma)\norm{q^{t_1}-q^{t_2}}$, then obviously there is no $(1-\gamma)$ contraction that is consistent with the answers. 

Now suppose that no such pair exists. We define a function $f:[0,1]^k\mapsto [0,1]^k$ as follows: For every point $x\in[0,1]^k$ and coordinate $i\in[k]$, we let $f(x)_i=\min_{t\in[m]}\set{(1-\gamma)\norm{x-q^{t}}+a^t_i}$; if the minimal value of this set is larger than $1$, then we set $f(x)_i=1$. 

We show first that $f$ is consistent with all the query answers,
i.e. $f(q^j) = a^j$ for all $j \in [m]$.
Since the query points satisfy the contraction property,
we have $\norm{a^{j}-a^{t}} \leq (1-\gamma)\norm{q^{j}-q^{t}}$ for all $t \neq j$.
Therefore, for every coordinate $i \in [k]$,  $a^j_i \leq (1-\gamma)\norm{q^{j}-q^{t}} + a^t_i$.
Hence, $f(q^j)_i = \min_{t\in[m]}\set{(1-\gamma)\norm{q^j-q^{t}}+a^t_i} = a^j_i$.
Thus, $f(q^j) = a^j$.

We show now that the function $f$ constructed above is a $(1-\gamma)$-contraction. Consider any two points $x,y\in[0,1]^k$ and a coordinate $i\in[k]$. 
Suppose without loss of generality that $f(y)_i \leq f(x)_i$.
If $f(y)_i =1$, then also $f(x)_i=1$ and $|f(x)_i - f(y)_i| =0 \leq (1-\gamma)\norm{x-y}$.
So suppose $f(y)_i = (1-\gamma)\norm{y-q^{t}}+a^t_i$ for some $t \in [m]$.
By the triangle inequality,
$\norm{x-q^{t}} \leq \norm{x-y} + \norm{y-q^t} $.
Hence $f(x)_i \leq (1-\gamma)\norm{x-q^{t}} + a^t_i $
$\leq (1-\gamma)(\norm{x-y} + \norm{y-q^{t}}) + a^t_i= (1-\gamma)\norm{x-y}+ f(y)_i$.
Therefore, $0 \leq f(x)_i -f(y)_i \leq (1-\gamma)\norm{x-y}$.
Thus, $\norm{f(x) -f(y)}  \leq (1-\gamma)\norm{x-y}$.
\end{proof}

It follows from \Cref{lem:total} that we can use any algorithm that can solve the promise problem $\Contraction(\eps,\gamma,k)$ to solve the
total search version $\TContraction(\eps,\gamma,k)$ within the same number of queries (without loss of generality, we can assume that the algorithm for the promise problem $\Contraction(\eps,\gamma,k)$ always queries the approximate fixed point it finds).
To see it, consider the sequence of query points $q^1, \ldots, q^m$ generated by the algorithm and the corresponding
answers $a^1, \ldots, a^m$ of the black-box oracle.
If $\norm{q^i - a^i} \leq \epsilon$ for some $i$, then the algorithm will return the
$\epsilon$-approximate fixed point $q^i$ and terminate.
If there is a pair of generated queries that violate the contraction property,
then the algorithm can return the violating pair and terminate.
Suppose that neither of these events happens;
since all pairs of queries satisfy the contraction property,
there is a $(1-\gamma)$-contraction map $f$ that is consistent with all the queries.
Then the algorithm for the promise problem $\Contraction(\eps,\gamma,k)$ fails
to find an $\epsilon$-approximate fixed point for the black-box oracle function $f$
within the prescribed time bound, a contradiction.
Therefore, given any black-box oracle, the algorithm will find either an $\epsilon$-approximate fixed point
or a violating pair within the same number of queries $m=O(k \log (1/\epsilon))$ as the promise problem.

\section{Conclusions and Discussions}
We gave an algorithm for finding an $\eps$-fixed point of a contraction
(or non-expansive) map $f: [0,1]^k \mapsto [0,1]^k$ under the $\ell_{\infty}$ norm in polynomial query complexity.
Contraction maps under the $\ell_{\infty}$ norm are especially important because several  longstanding open problems from various fields can be cast in this framework.
The main open question is whether there is a general-purpose (black-box) algorithm for contraction maps that finds an approximate fixed point in polynomial time.
Resolving positively this question would have tremendous implications.

Our results provide the necessary first step towards the goal above since, in the black-box model, any polynomial-time algorithm must imply polynomial-query complexity. Nevertheless, we currently see obstacles to making our algorithm time-efficient. Namely, it is not clear how to efficiently find (or even verify) an approximate balanced point, whose existence guarantee is based on non-constructive arguments.

On the other hand, it is possible that finding an approximate fixed point of contraction maps is computationally hard (e.g., $\CLS$-hard or $\UEOPL$-hard). However, our result restricts what approaches could work in this direction. At a high level, such reduction must work in a white-box manner. To convey the interpretation more concretely, take the canonical $\CLS$-complete problem $\textsc{Continuous-Localopt}$ as an example. The problem takes two continuous functions, $p$ and $g$, as inputs and asks for an approximate local optimum of $p$ with respect to $g$; see its formal definition in \cite{FGHS23}. It could be the case where the reduction takes the polynomial-size arithmetic circuits that represent $p$ and $g$ as inputs and simulates their gates to generate a contraction map $f$ such that any approximate fixed point of $f$ encodes an approximate local optimum of $p$ with respect $g$. In contrast, it cannot be the case where the reduction treats the arithmetic circuits for $p$ and $g$ as evaluation oracles. Specifically, it cannot produce a contraction map $f$ where the evaluation of $f$ requires only a polynomial number of calls to the oracles for $p$ and $g$, since there are exponential query lower bounds for the $\textsc{Continuous-Localopt}$ problem \cite{hubacek2020hardness}.

An open question regarding the query complexity of $\Contraction(\eps,\gamma,k)$ is whether the upper bound $O(k\log(1/\eps))$ is tight. Currently, we are not aware of lower bounds that are stronger than the standard $\Omega(\log(1/\eps))$, which holds even when $k=1$. In an earlier version of this article, we asked the question of whether similar query bounds can be obtained for contraction maps under the $\ell_1$-norm or norms $\ell_p$ with $p>2$. A very recent unpublished paper~\cite{haslebacher2025query} answered this question affirmatively by showing $O(k^2\log(1/\eps\gamma)))$ query upper bounds for all norms $\ell_p$ with $p\in[1,\infty)\cup\set{\infty}$.

\section*{Acknowledgements} The authors are very grateful to the editor and anonymous reviewers for their helpful feedback
on improving the presentation.

\begin{flushleft}
\bibliographystyle{alpha}
\bibliography{ref}
\end{flushleft}

\end{document}